\theoremstyle{plain}
\newtheorem{theorem}{Theorem}
\newtheorem{lemma}[theorem]{Lemma}
\newtheorem{observation}[theorem]{Observation}
\newtheorem{remark}{Remark}
\title{On the Deque and Rique Numbers of\\Complete and Complete Bipartite Graphs}
\author{
    Michael A. Bekos$^1$, Michael Kaufmann$^2$, Maria Eleni Pavlidi$^3$, Xenia Rieger$^4$, 
\\
\medskip
\\
\small$^1$Department of Mathematics, University of Ioannina, Ioannina, Greece\\
\small\texttt{bekos@uoi.gr}
\\
\small$^2$Institute for Computer Science, University of T{\"u}bingen, T{\"u}bingen, Germany\\
\small\texttt{michael.kaufmann@uni-tuebingen.de}
\\
\small$^3$Department of Mathematics, University of Ioannina, Ioannina, Greece\\
\small\texttt{marialenaregie3@gmail.com}
\\
\small$^4$Institute for Computer Science, University of T{\"u}bingen, T{\"u}bingen, Germany\\
\small\texttt{xenia.rieger@student.uni-tuebingen.de}
}
\date{}
\begin{document}
\thispagestyle{empty}
\maketitle

\begin{abstract}
Several types of linear layouts of graphs are obtained by leveraging  known data structures; the most notable representatives are the stack and the queue layouts. In this content, given a data structure, one seeks to specify an order of the vertices of the graph and a partition of its edges into \emph{pages}, such that the endpoints of the edges assigned to each page can be processed by the given data structure in the underlying order. 

In this paper, we study deque and rique layouts of graphs obtained by leveraging the double-ended queue and the restricted-input double-ended queue (or deque and rique, for short), respectively. Hence, they generalize both the stack and the queue layouts. We focus on complete and complete bipartite graphs and present bounds on their deque- and rique-numbers, that is, on the minimum number of pages needed by any of these two types of linear layouts.  
\end{abstract}

\section{Introduction}
\label{sec:introduction}

Stack and queue layouts form two of the most studied types of linear layouts of graphs; they date back to 70's~\cite{DBLP:journals/jct/BernhartK79,DBLP:journals/siamcomp/HeathR92} and over the years several remarkable results have been proposed in the literature~\cite{DBLP:journals/jacm/DujmovicJMMUW20,DBLP:journals/siamdm/HeathLR92,DBLP:conf/esa/0001K19,DBLP:conf/soda/JungeblutMU22,DBLP:journals/jcss/Yannakakis89}. For an introduction, refer to \cref{sec:preliminaries}. Both layouts are defined by an underlying vertex order and an edge-partition into a certain number of so-called \emph{pages} (\emph{stacks} or \emph{queues}, respectively), such that when restricting to a single page the endpoints of the edges assigned to it can be processed by the corresponding data structure in the order that appears in the underlying vertex order. 

Since given a graph the natural goal is to find a layout of it that minimizes the number of used pages under the restrictions mentioned above, stack and queue layouts have naturally been leveraged to estimate the power of the respective data structures as a mean for representing graphs (for a wealth of other applications, e.g., to VLSI design and Graph Drawing, refer to~\cite{DBLP:journals/dmtcs/DujmovicW04}). The well-known \emph{stack-number} (a.k.a.\ \emph{book-thickness} or \emph{page-number} in the literature) of a graph corresponds to the minimum number of stacks required by any of the stack layouts of it; the queue-number of a graph is defined symmetrically. In this context, it was recently shown that the stack-number of a graph cannot always be bounded by its corresponding queue-number~\cite{DBLP:journals/combinatorica/DujmovicEHMW22}, resolving a long-standing open question by Heath, Leighton and Rosenberg~\cite{DBLP:journals/siamdm/HeathLR92}; the other direction is still unknown. 

A data structure that generalizes both the stack and the queue is the so-called \emph{double-ended queue} or \emph{deque}, for short.\footnote{While in a stack insertions and removals only occur at its head and in a queue insertions only occur at its head and removals only at its tail, a deque supports insertions and removals both at its head and its tail.} As a matter of fact, the most common implementations of stacks and queues are derived by restricting corresponding implementations of deques. So, in this aspect, one naturally expects that the corresponding linear layouts that are obtained by employing the deque data structure for stipulating their edge-partitions will require fewer pages (called \emph{deques} in this content) than those of stack or queue layouts, since, obviously, the latter form a special case of the former. 

However, in contrast to the literature for stack and queue layouts, the corresponding literature for deque layouts is significantly reduced. To the best of our knowledge, there exists only one work introducing and studying deque layouts by Auer et al.~\cite{DBLP:conf/gd/AuerBBBG10}, who provide a complete characterization of the graphs admitting $1$-deque layouts (that is, deque layouts with a single deque): a graph
admits a $1$-deque layout if and only if it is a spanning subgraph of a planar graph with a Hamiltonian path; see also~\cite{DBLP:books/daglib/0035667}. Even though the \emph{deque-number} of a graph (that is, the minimum number of deques required by any of the deque layouts of the graph) has not been explicitly studied so far in the literature as a graph parameter, from the characterization by Auer et al.\ one can easily deduce the following.

\begin{observation}[Auer et al.~\cite{DBLP:conf/gd/AuerBBBG10}]\label{obs:deque-stack-bound}
The deque-number of a graph is at most half of its stack-number.
\end{observation}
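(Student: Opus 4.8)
The plan is to show that a single deque can host the edges of two stacks sharing a common vertex order, and then to apply this repeatedly to an optimal stack layout. Concretely, I would start from a stack layout of $G$ attaining its stack-number $s=\mathrm{sn}(G)$ with underlying vertex order $\prec$, keep $\prec$ fixed throughout, and aim to produce a deque layout on the same order $\prec$ using only $\lceil s/2\rceil$ pages.

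The core lemma I would establish is that any two stacks $S_1,S_2$ over a common order $\prec$ can be merged into one deque over $\prec$. The guiding idea is that a deque offers two independent ends and that a deque which only ever operates at one of its ends behaves exactly like a stack. I would therefore assign $S_1$ to the head and $S_2$ to the tail: upon reaching the left endpoint of an $S_1$-edge we insert it at the head and remove it from the head at its right endpoint, while $S_2$-edges are inserted and removed symmetrically at the tail. The correctness of this processing rests on the invariant that, reading the deque from head to tail, the currently open edges always appear as the block of open $S_1$-edges followed by the block of open $S_2$-edges. Since the open edges of a single stack are always nested in $\prec$ (no two edges of a stack cross), at every right endpoint of $S_1$ the edge that must be closed is precisely the innermost, i.e.\ most recently opened still-open, $S_1$-edge, which sits exactly at the head; the mirror argument at the tail handles $S_2$. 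Hence the head-operations and tail-operations never interfere, even when one of the two blocks is momentarily empty, and both stacks are realized by the one deque without altering $\prec$.

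With the lemma in hand I would finish by pairing the $s$ stacks of the optimal layout into $\lceil s/2\rceil$ pairs, keeping a leftover odd stack as a deque that uses a single end, and merging each pair into a deque as above. All resulting deques share the order $\prec$, so together they constitute a valid deque layout of $G$, yielding $\mathrm{dqn}(G)\le\lceil s/2\rceil=\lceil \mathrm{sn}(G)/2\rceil$, that is, at most half the stack-number.

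The step deserving the most care is the order-preservation in the merging lemma. It is tempting to argue via known characterizations, since a $2$-stack graph is sub-Hamiltonian planar, hence a spanning subgraph of a planar graph with a Hamiltonian path, hence admits a $1$-deque layout by the characterization of Auer et al.; however, such existence statements need not return a $1$-deque layout on the original order $\prec$, whereas assembling the pieces into one global deque layout forces all pages to use a single common order. This is exactly why I would prove the merging directly through the head/tail invariant rather than invoking the characterizations as a black box.
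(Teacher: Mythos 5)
Your proof is correct and matches the argument the paper intends: pair the stacks of an optimal stack layout and realize each pair as one deque on the same vertex order, one stack per end. The paper phrases this via the cylindric representation (one stack drawn as head-head edges above the spine, the other as tail-tail edges below), which is exactly your head/tail invariant in geometric form, and your remark about preserving the common order $\prec$ is precisely why this direct merging is preferable to invoking the Auer et al.\ characterization as a black box.
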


Note that the queue-number is also a trivial upper bound on the deque-number of a graph. \cref{obs:deque-stack-bound}, however, immediately implies improved upper bounds on the deque-number of several graph classes, e.g., the deque-number of the complete graph $K_n$ is at most $\lceil\frac{n}{4}\rceil$~\cite{DBLP:journals/jct/BernhartK79}, the deque-number of the complete graph $K_{n,n}$ is at most $\lceil\frac{\lfloor 2n/3 \rfloor+1}{2}\rceil$~\cite{DBLP:journals/jct/EnomotoNO97}, while the deque-number of treewidth-$k$ graphs is at most $\lceil\frac{k+1}{2}\rceil$~\cite{DBLP:journals/dam/GanleyH01}. Also, since there exist maximal planar graphs that do not have a Hamiltonian path (e.g., the $n$-vertex ones with an independent set of size greater than $\frac{n}{2}+2$), it follows by a well-known result by Yannakakis~\cite{DBLP:journals/jcss/Yannakakis89} that the deque-number of planar graphs is~$2$; see also~\cite{DBLP:journals/jocg/KaufmannBKPRU20,DBLP:journals/jctb/Yannakakis20}.

Another consequence of \cref{obs:deque-stack-bound} is that deque layouts cannot be characterized by means of forbidden patterns in the underlying linear order, as it is the case, e.g., for stack and queue layouts~\cite{DBLP:journals/siamcomp/HeathR92,Oll73}; the former do not allow two edges of the same page to cross (i.e., to have alternating endpoints), while in the latter no two edges of the same page nest (i.e., have nested endpoints). The reason for the lack of such a characterization for deque layouts is the fact that maximal planar graphs with a Hamiltonian path are the maximal graphs that admit $2$-stack layouts and these layouts do not admit characterizations in terms of forbidden patterns in the underlying linear order~\cite{Wig82}. A characterization in terms of forbidden patterns is possible, however, for a special type of deque layouts, which were recently introduced and are referred to as \emph{rique layouts}~\cite{DBLP:conf/gd/BekosFKKKR22}, since the underlying data structure is of restricted input (so-called \emph{restricted-input double-ended queue} or \emph{rique},\footnote{Formally, in a rique insertions occur only at the head, and removals occur both at the head and the tail. Thus, it is a special case of a deque and a generalization of a stack or of a queue.} for short): a graph admits a $1$-rique layout if and 
only if it admits a vertex order $\prec$ avoiding three edges $(a,a')$, $(b,b')$ and $(c,c')$ such that $a \prec b \prec c \prec b' \prec \{a',c'\}$.

\paragraph{Our contribution.} In this work, we present bounds on the deque- and rique-numbers of complete and complete bipartite graphs. Especially, for deque layouts, the main research question that triggered our  work is whether it is possible to obtain better bounds than the obvious ones that one can deduce from \cref{obs:deque-stack-bound} (or in other words, whether the deque data structure is more powerful for representing graphs than two stacks). Surprisingly enough, we prove that for the case of complete graphs, this is not the case (see \cref{thm:deque-complete}), while for the case of complete bipartite graphs our upper bound shows that an improvement by a constant number is possible (to achieve this, however, we describe a rather complicated edge-to-deques assignment; see \cref{thm:deque-bipartite}). For rique layouts, our contribution is twofold. First, we improve the upper bound on the rique-number of $K_n$ from $\lceil \frac{n}{3} \rceil$~\cite{DBLP:conf/gd/BekosFKKKR22} to $\lfloor\frac{n-1}{3}\rfloor$ (see \cref{thm:rique-complete}), which we prove to be tight up to $n=30$ using an SAT-based approach (see \cref{sec:sat}). We complete our study with an upper bound of $\lfloor\frac{n-1}{2}\rfloor-1$ on the rique-number of $K_{n,n}$.

\section{Preliminaries}
\label{sec:preliminaries}

A \emph{vertex order} $\prec$ of a graph $G$ is a total order of its vertices, such that for any two vertices $u$ and $v$ of $G$, $u \prec v$ if and only if $u$ precedes $v$ in the order. We write $[u_1,\dots, u_k ]$ if and only if $u_i \prec u_{i+1}$ for all $1 \leq i \leq k-1$. Let $F$ be a set of $k \geq 2$ pairwise independent edges $(u_i, v_i)$ of $G$, that is, $F=\{(u_i, v_i);\;i=1,\dots,k\}$. If the order is $[u_1, \dots, u_k, v_k, \dots, v_1]$, then we say that the edges of $F$ form a \emph{$k$-rainbow}, while if the order is $[u_1, v_1, \dots, u_k, v_k]$, then the edges of $F$ form a \emph{$k$-necklace}. The edges of $F$ form a \emph{$k$-twist}, if the order is $[u_1, \dots, u_k, v_1, \dots, v_k]$; see \cref{fig:necklace-twist}. Two independent edges that form a $2$-twist ($2$-rainbow, $2$-necklace) are commonly referred to as \emph{crossing} (\emph{nested}, \emph{disjoint}, respectively). 

\begin{figure*}[t!]
	\centering	
	\begin{subfigure}[b]{.24\textwidth}
		\centering
		\includegraphics[scale=1,page=1]{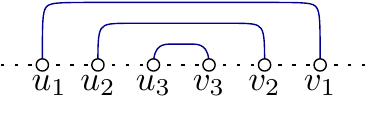}
		\caption{}
		\label{fig:rainbow}
	\end{subfigure}
	\hfil
	\begin{subfigure}[b]{.24\textwidth}
		\centering
		\includegraphics[scale=1,page=3]{figures/preliminaries}
		\caption{}
		\label{fig:necklace}
	\end{subfigure}
	\hfil
	\begin{subfigure}[b]{.24\textwidth}
		\centering
		\includegraphics[scale=1,page=2]{figures/preliminaries}
		\caption{}
		\label{fig:twist}
	\end{subfigure}
	\hfil
	\begin{subfigure}[b]{.24\textwidth}
		\centering
		\includegraphics[scale=1,page=4]{figures/preliminaries}
		\caption{}
		\label{fig:deque}
	\end{subfigure}
	\caption{Illustration of:
		(a)~a $3$-rainbow, 
		(b)~a $3$-twist,  
		(c)~a $3$-necklace, and 
        (d)~a deque.}
	\label{fig:necklace-twist}
\end{figure*}

A \emph{stack} is a set of pairwise non-crossing edges in $\prec$, while a \emph{queue} is a set of pairwise non-nested edges in $\prec$. A \emph{rique} is a set of edges in which no three edges $(a,a')$, $(b,b')$ and $(c,c')$ with $a \prec b \prec c \prec b' \prec \{a',c'\}$ exists in $\prec$. A deque is more difficult to describe due to the absence of a forbidden pattern. A relatively-simple way is the following. Assume that the vertices of a graph are arranged on a horizontal line $\ell$ from left to right according to $\prec$ (say, w.l.o.g., equidistantly). Then, each edge $(v_i,v_j)$ with $v_i \prec v_j$ can be represented either (i)~as a semi-circle that is completely above or completely below $\ell$ connecting $u_i$ and $u_j$, or (ii)~as two semi-circles on opposite sides of $\ell$, one that starts at $u_i$ and ends at a point $p_{ij}$ of $\ell$ to the right of the last vertex of $\prec$ and one that starts at point $p_{ij}$ and ends at $u_j$. With these in mind, a deque is a set of edges each of which can be represented with one of the two types (i) or (ii) that avoids crossings (such a representation is called \emph{cylindric} in~\cite{DBLP:conf/gd/AuerBBBG10}); see \cref{fig:deque}. A deque further allows classifying the edges into four categories: head-head, tail-tail, head-tail and tail-head; refer to the blue, light-blue, red and light-red edges of \cref{fig:deque}, respectively. A \emph{head-head} (\emph{tail-tail}) edge is a type-(i) edge drawn above (below, respectively) $\ell$. Symmetrically, a \emph{head-tail} (\emph{tail-head}) edge is a type-(ii) edge whose first part is above (below) $\ell$, while its second part is below (above, respectively) $\ell$. Given a deque layout $L$ and a set of edges $E$, we write $E_x$ to denote that all edges of $E$ are of type-$x$ in $L$, where $x \in \{hh,tt,ht,th\}$.

In view of the above definitions, a rique can be equivalently defined as a deque without tail-tail and tail-head edges~\cite{DBLP:conf/gd/BekosFKKKR22}. Also, it is not difficult to see that the subset of the head-head or tail-tail edges of a deque induce a stack in $\prec$, while the set of the head-tail or tail-head edges of a deque induces a queue in $\prec$. 

Since we focus on complete and complete bipartite graphs, for representing their linear layouts we use a convenient way first introduced in~\cite{DBLP:journals/jgt/MuderWW88} and subsequently used in several works~\cite{DBLP:journals/jct/EnomotoNO97,DBLP:conf/gd/FelsnerMUV21,DBLP:journals/tcs/AlamBGKP22}. Let $\prec$ be an order of the $n$ vertices $v_1,\dots,v_n$ of a graph $G$ such that $v_1 \prec \dots \prec v_n$. Then, each edge $(v_i,v_j)$ of $G$ with $i<j$ is mapped to point $(i,j)$ of the $n \times n$ grid $H=[1,n] \times [1,n]$. A set of head-head or tail-tail edges of the same page (deque or rique) corresponds to a set of points on $H$ whose union forms a monotonically decreasing curve on $H$~\cite{DBLP:journals/jgt/MuderWW88}. A set of head-tail or tail-head edges of the same page (deque or rique) corresponds to a set of points on $H$ whose union forms a monotonically increasing path on $H$~\cite{DBLP:journals/tcs/AlamBGKP22}. If a deque~contains head-tail and tail-head edges, special care is needed to avoid configurations not appearing in a cylindric layout.

\section{Complete graphs}
\label{sec:complete}

In this section, we study the deque- and rique-numbers of the complete graph $K_n$. As already mentioned, \cref{obs:deque-stack-bound} implies that $\lceil \frac{n}{4} \rceil$ is an easy-to-obtain upper bound on the deque-number of $K_n$. In the following, we prove that this bound is tight. To do so, we first give an estimation on the maximum number of edges that a graph admitting a $k$-deque layout can have. 

\begin{lemma}\label[lemma]{lem:deque-density}
A graph with $n$ vertices admitting a deque layout with $k$ pages has at most $(2k+1)n - 5k - 1$ edges.
\end{lemma}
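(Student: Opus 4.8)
The plan is to reduce the density of a $k$-deque layout to $k$ applications of the planar bound $3n-6$, exploiting the fact that a single deque is essentially a planar structure once the spine is added. Fix a vertex order $\prec$, say $v_1 \prec \dots \prec v_n$, realizing the layout, and let $S = \{(v_i,v_{i+1}) : 1 \le i \le n-1\}$ be the $n-1$ \emph{spine edges} induced by $\prec$ (whether present in $G$ or not). First I would isolate the following single-page claim: for any one deque page with edge set $E_i$, the graph $(V, E_i \cup S)$ is planar. This is exactly the content of the characterization of Auer et al.~\cite{DBLP:conf/gd/AuerBBBG10}: a $1$-deque graph is a spanning subgraph of a planar graph with a Hamiltonian path, and that Hamiltonian path is the spine order $\prec$ itself. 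Concretely, one takes the cylindric drawing of the page and observes that adding the spine edges along $\ell$ introduces no crossings: type-(i) edges are semi-circles meeting $\ell$ only at their endpoints, while type-(ii) edges cross $\ell$ only to the right of $v_n$; hence no spine segment between consecutive vertices is crossed, and the resulting drawing is plane.

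Granting the claim, the counting is immediate. For $n \ge 3$, planarity gives $|E_i \cup S| \le 3n-6$, and since $S$ contributes exactly $n-1$ edges we obtain $|E_i \setminus S| \le (3n-6)-(n-1) = 2n-5$ for every page $i$. The edge set of $G$ splits into the spine edges it contains, of which there are at most $|S| = n-1$, and the remaining non-spine edges, which are partitioned among the $k$ pages. Summing the per-page bound,
\[
|E(G)| \;\le\; (n-1) + \sum_{i=1}^{k} |E_i \setminus S| \;\le\; (n-1) + k(2n-5) \;=\; (2k+1)n - 5k - 1,
\]
as claimed; note that for $k=1$ this recovers the maximal planar bound $3n-6$, consistent with the characterization.

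The only real obstacle is the single-page planarity claim, i.e.\ verifying that the cylindric representation together with the spine is a genuine planar drawing. The delicate point is the treatment of the type-(ii) (head-tail and tail-head) edges, whose two parts lie on opposite sides of $\ell$ and wrap around the right end; I would argue that, because all of their crossings with $\ell$ occur strictly to the right of the last vertex and the deque condition already forbids crossings among the page edges, prepending the spine leaves the drawing crossing-free. Everything else is bookkeeping; in particular, the decomposition of $E(G)$ into spine and non-spine edges ensures the $n-1$ spine edges are counted once rather than once per page, which is precisely what separates the bound $(2k+1)n-5k-1$ from the weaker $k(3n-6)$.
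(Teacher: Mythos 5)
Your proposal is correct and follows essentially the same route as the paper: each deque page together with the $n-1$ spine edges forms a planar graph (hence at most $3n-6$ edges, so at most $2n-5$ non-spine edges per page), and summing over the $k$ pages plus the spine yields $(2k+1)n-5k-1$. The only difference is that you spell out the planarity of the cylindric drawing with the spine added, which the paper takes for granted from the characterization of Auer et al.
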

\begin{proof}
Let $G$ be a graph with $n$ vertices admitting a $k$-deque layout. Let also $v_1 \prec \dots \prec v_n$ be the linear order of the vertices of $G$. Since each deque induces a planar graph, it has at most $3n-6$ edges. However, the $n-1$ so-called \emph{spine edges} $(v_i,v_{i+1})$, $ i=1,\dots,n-1$ can be added as head-head edges to every deque of the layout. So, every deque has at most $2n-5$ non-spine edges. Hence, in total $G$ has $(2n-5)k+n-1$ edges.
\end{proof}

\noindent We are now ready to prove that the deque-number of the complete graph $K_n$ is $\lceil \frac{n}{4} \rceil$.

\newcommand{\dequecomplete}{The deque-number of $K_n$ is $\lceil\frac{n}{4}\rceil$.}

\begin{theorem}\label{thm:deque-complete}
\dequecomplete
\end{theorem}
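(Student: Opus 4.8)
The upper bound $\lceil\frac{n}{4}\rceil$ is already in hand: it was recorded in the introduction as a direct consequence of \cref{obs:deque-stack-bound} together with the classical fact that the stack-number (book-thickness) of $K_n$ equals $\lceil\frac{n}{2}\rceil$. So the plan is to establish the matching lower bound, i.e.\ that every deque layout of $K_n$ uses at least $\lceil\frac{n}{4}\rceil$ pages, and for this I would rely entirely on the density estimate of \cref{lem:deque-density}.

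Concretely, suppose $K_n$ admits a $k$-deque layout. Since $K_n$ has $\binom{n}{2}=\frac{n(n-1)}{2}$ edges, \cref{lem:deque-density} gives $\frac{n(n-1)}{2}\le (2k+1)n-5k-1$. I would multiply by $2$ and collect the terms involving $k$ on one side; after rearranging, the $k$-free terms combine into $(n-1)(n-2)$ and the $k$-terms into $2k(2n-5)$, yielding $(n-1)(n-2)\le 2k(2n-5)$. For $n\ge 3$ the factor $2n-5$ is positive, so this is equivalent to the lower bound $k\ge \frac{(n-1)(n-2)}{2(2n-5)}$.

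The only genuine subtlety is that this right-hand side is \emph{not} itself $\lceil\frac{n}{4}\rceil$: asymptotically it behaves like $\frac{n}{4}-\frac{1}{8}$, so it sits strictly below the target. The crux is therefore to squeeze out the last bit using integrality of $k$, by proving the strict inequality $\frac{(n-1)(n-2)}{2(2n-5)}>\lceil\frac{n}{4}\rceil-1$, which forces $k\ge\lceil\frac{n}{4}\rceil$. To keep this residue-free rather than splitting into four cases mod $4$, I would use the uniform slack estimate $\lceil\frac{n}{4}\rceil-1\le\frac{n-1}{4}$ (equivalently $\lceil\frac{n}{4}\rceil\le\frac{n+3}{4}$), so it suffices to check $\frac{(n-1)(n-2)}{2(2n-5)}>\frac{n-1}{4}$. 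Cancelling the positive factor $n-1$ and cross-multiplying collapses this to $2(n-2)>2n-5$, i.e.\ $-4>-5$, which holds for every $n\ge 3$; the tiny cases $n\le 2$ are trivial. Chaining these gives $k\ge\frac{(n-1)(n-2)}{2(2n-5)}>\frac{n-1}{4}\ge\lceil\frac{n}{4}\rceil-1$, hence $k\ge\lceil\frac{n}{4}\rceil$.

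I expect the main (and essentially only) work to be this arithmetic bookkeeping around the ceiling, and in particular choosing the slack estimate so that a single clean inequality covers all four residues at once. The truly tight regime is $n\equiv 0\pmod 4$, where the density bound barely clears $\frac{n}{4}-1$; what makes the argument go through uniformly is precisely that the reduced inequality $-4>-5$ is strict and independent of the residue, so integrality bites in every case. Combining the lower bound with the upper bound from \cref{obs:deque-stack-bound} then pins the deque-number of $K_n$ to $\lceil\frac{n}{4}\rceil$.
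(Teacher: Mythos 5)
Your proposal is correct and takes essentially the same route as the paper: the upper bound comes from \cref{obs:deque-stack-bound} together with the stack-number of $K_n$, and the lower bound comes from \cref{lem:deque-density}, which yields $k \ge \frac{(n-1)(n-2)}{2(2n-5)} = \frac{n^2-3n+2}{4n-10}$, after which integrality of $k$ closes the gap to $\lceil\frac{n}{4}\rceil$. The only divergence is the final bookkeeping—where the paper argues via a ceiling identity, you use the uniform slack bound $\lceil\frac{n}{4}\rceil-1\le\frac{n-1}{4}$, which is arguably cleaner; one small side remark is off, though: the tight residue is $n\equiv 1\pmod 4$ (where that slack bound holds with equality and the margin is only $\tfrac18$ plus lower-order terms), not $n\equiv 0\pmod 4$.
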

\begin{proof}
The upper bound follows from \cref{obs:deque-stack-bound} and \cite{DBLP:journals/jct/BernhartK79}. For the lower bound, let $k$ be the number of deques of $K_n$. Since $K_n$ has $\frac{n(n-1)}{2}$ edges, by \cref{lem:deque-density}, it follows that $(2k+1)n - 5k - 1 \geq \frac{n^2 - n}{2}$, which implies:

\[
k \geq \frac{n^2 - 3n + 2}{4n - 10} \quad \text{for $n$ $\geq 3$}
\]

\noindent To complete the proof of the theorem, we next show that $\lceil \frac{n^2 - 3n + 2}{4n - 10} \rceil = \lceil \frac{n}{4} \rceil$.
We do this in two steps. We first prove that $\lceil \frac{n^2 - 3n + 2}{4n - 10} \rceil - \lceil \frac{n}{4} \rceil \leq 0$.

\[
    \left\lceil \frac{n^2 - 3n + 2}{4n - 10} \right\rceil - \left\lceil \frac{n}{4} \right\rceil
    \leq \left\lceil \frac{n}{4} \cdot \frac{n - 3}{n - \frac{10}{4}} \right\rceil - \left\lceil \frac{n}{4} \right\rceil \leq 0
\]

\noindent We next prove that $\lceil \frac{n^2 - 3n + 2}{4n - 10} \rceil - \lceil \frac{n}{4} \rceil \geq 0$ holds.

\[
    \left\lceil \frac{n^2 - 3n + 2}{4n - 10} \right\rceil - \left\lceil \frac{n}{4} \right\rceil
     \geq 
    \left\lceil \frac{n}{4} \cdot \frac{n + \frac{2}{n}}{n - \frac{5}{2}} \right\rceil - \left\lceil \frac{n}{4} \right\rceil
     \geq 0
\]

\noindent Hence, the proof is completed.
\end{proof}

\noindent For rique layouts, the analog of \cref{lem:deque-density} is the following, which has been used to show a lower bound of $(1-\frac{\sqrt 2}{2})(n-2)$ on the rique-number of  $K_n$~\cite{DBLP:conf/gd/BekosFKKKR22}.

\begin{lemma}[Bekos et al.~\cite{DBLP:conf/gd/BekosFKKKR22}]\label[lemma]{lem:rique-density}
A graph with $n$ vertices admitting a rique layout with $k$ pages has at most $(2n+2)k-k^2+(n-3)$ edges.
\end{lemma}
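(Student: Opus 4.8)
The plan is to argue entirely inside the grid representation $H=[1,n]\times[1,n]$ recalled in \cref{sec:preliminaries}, where each edge $(v_i,v_j)$ with $i<j$ becomes the lattice point $(i,j)$ strictly above the main diagonal. Fix an optimal $k$-rique layout. On every page $p\in\{1,\dots,k\}$ I would split its edges into the head-head part $D_p$ and the head-tail part $U_p$. By the facts stated in \cref{sec:preliminaries}, the points of $D_p$ form a monotonically decreasing staircase (equivalently, $D_p$ is a stack in $\prec$) and the points of $U_p$ form a monotonically increasing staircase (equivalently, $U_p$ is a queue in $\prec$). Since a rique has no tail-tail and no tail-head edges, the $2k$ staircases $D_1,U_1,\dots,D_k,U_k$ form a \emph{partition} of the edge set, so that the number of edges equals the number of above-diagonal lattice points of $H$ jointly covered by them.

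First I would bound the head-tail contribution. The sets $U_1,\dots,U_k$ are $k$ pairwise point-disjoint increasing staircases, so together they realise a $k$-queue; their union therefore contains no $(k+1)$-rainbow and, by the classical Heath--Leighton--Rosenberg extremal bound for $k$-queues, covers at most $2kn-k(2k+1)$ points. This supplies the dominant term $2kn$ together with a quadratic loss. Next I would account for the decreasing staircases $D_1,\dots,D_k$. Each $D_p$ is a stack, but the point to exploit is that $D_p$ and $U_p$ cannot be chosen independently on the same page: the rique forbids the pattern $a\prec b\prec c\prec b'\prec\{a',c'\}$, which prevents a head-head edge that is nested inside another from being crossed by a head-tail edge entering from its right. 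Using this coupling together with the disjointness of $\bigcup_p D_p$ from $\bigcup_p U_p$, I would show that the head-head staircases, the shared spine, and the region near the diagonal contribute only the remaining lower-order amount, so that the two estimates combine to the claimed $(2n+2)k-k^2+(n-3)$.

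The genuinely delicate step is the joint bookkeeping, because the bound is a \emph{trade-off} rather than a sum of two independent estimates: with no head-tail edges the head-head part alone can reach a full $k$-stack of size $\approx kn$, while a near-maximal head-tail queue forces the head-head part down to $O(k^2+n)$, and the quantity $(2n+2)k-k^2+(n-3)$ is precisely the maximum of this trade-off. Concretely, the net coefficient $-k^2$ should arise as the difference between the $-2k^2$ loss coming from the $k$-queue of head-tail edges and the $+k^2$ that the coupled head-head staircases can add back, with the additive $(n-3)$ absorbing the spine. As consistency checks I would verify the estimate against planarity for $k=1$ (where $3n-2$ is a valid, if not tight, bound) and confirm that substituting $\binom{n}{2}\le (2n+2)k-k^2+(n-3)$ and solving the resulting quadratic in $k$ reproduces the intended $(1-\tfrac{\sqrt2}{2})(n-2)$ lower bound on the rique-number of $K_n$.
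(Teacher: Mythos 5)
The paper itself gives no proof of \cref{lem:rique-density}: it is imported verbatim from Bekos et al.~\cite{DBLP:conf/gd/BekosFKKKR22}, so the only thing to compare your attempt against is whether it would stand on its own. It does not yet. The setup is fine --- splitting each page $p$ into its head-head part $D_p$ (a stack) and its head-tail part $U_p$ (a queue), and bounding $\bigcup_p U_p$ by the Heath--Leighton--Rosenberg extremal bound $2kn-k(2k+1)$ for $k$-queue layouts is legitimate. But the step that carries the entire content of the lemma is precisely the one you defer. Summing the two \emph{independent} extremal bounds (the $k$-queue bound for $\bigcup_p U_p$ and the $k$-stack bound $(k+1)n-3k$ for $\bigcup_p D_p$) gives $3kn+n-2k^2-4k$, which already for $k=1$ yields $4n-6$ against the claimed $3n-2$; so the page-by-page interaction between $D_p$ and $U_p$ is indispensable, and you never make it precise. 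The sentence about the forbidden pattern ``preventing a head-head edge that is nested inside another from being crossed by a head-tail edge entering from its right'' is not a statement one can compute with, the claim that the head-head staircases contribute only the remaining $k^2+3k+n-3$ is asserted rather than derived, and the paragraph explaining that the coefficient $-k^2$ ``should arise as'' $-2k^2$ plus a recovered $+k^2$ is reverse-engineered from the target expression, not obtained from the structure of riques.

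To close the gap you would need a quantitative coupling statement, for instance: if $U_p$ occupies the $r_p$-th layer of the global rainbow structure (so that $r_1,\dots,r_k$ is a permutation of $1,\dots,k$), then $|U_p|+|D_p|$ is bounded by an explicit function of $r_p$ and $n$, after which one sums over $p$ and optimizes. That is the kind of argument the grid representation is designed to support --- each rique is one decreasing and one increasing staircase that must avoid each other cylindrically, which restricts where the decreasing staircase may start relative to the increasing one. Your two consistency checks (the value $3n-2$ at $k=1$, and recovering the $(1-\tfrac{\sqrt2}{2})(n-2)$ lower bound) are sensible sanity tests, but as written the proposal is a plan with its central inequality missing.
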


\noindent In the next theorem, we improve the best-known upper bound on the rique-number of $K_n$ from $\lceil \frac{n}{3} \rceil$~\cite{DBLP:conf/gd/BekosFKKKR22} to $\lfloor\frac{n-1}{3}\rfloor$.

\newcommand{\riquecomplete}{The rique-number of $K_n$ is at most $\lfloor\frac{n-1}{3}\rfloor$.}
\begin{theorem}\label{thm:rique-complete}
\riquecomplete
\end{theorem}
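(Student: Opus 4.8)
The plan is to construct an explicit rique layout of $K_n$ using $\lfloor\frac{n-1}{3}\rfloor$ pages. The natural strategy is to exploit the grid representation described in the preliminaries: fix the vertex order $v_1 \prec \dots \prec v_n$, map each edge $(v_i,v_j)$ with $i<j$ to the point $(i,j)$ of the grid $H$, and partition all such points into monotone pieces that each realize a valid rique page. Recall that in a rique (a deque without tail-tail and tail-head edges), the head-head edges of a single page form a monotonically decreasing curve on $H$ (a stack), while the head-tail edges form a monotonically increasing path (a queue). So a single rique page can absorb one decreasing chain together with one increasing chain, provided their interaction does not create the forbidden pattern $a \prec b \prec c \prec b' \prec \{a',c'\}$.

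First I would describe the combinatorial packing on the grid. The key observation is that a rique page is strictly more powerful than a single stack or a single queue: it can hold roughly twice as many edges per page because it combines a decreasing (stack) chain and an increasing (queue) chain. I would therefore partition the off-diagonal grid points into groups so that each group decomposes into one decreasing and one increasing monotone chain that together respect the rique pattern, and count how many such groups are forced. The target bound $\lfloor\frac{n-1}{3}\rfloor$ suggests partitioning the vertices (or the ``levels'' $j-i$ of the grid) into blocks of size roughly three, with each block of edges handled by one page; the "$-1$" and the floor strongly hint that the spine edges (differences $j-i=1$) and one additional near-diagonal band can be folded into existing pages for free, exactly as the spine edges were reused across all pages in the proof of \cref{lem:deque-density}.

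The concrete construction I would aim for assigns to page $p$ (for $p=1,\dots,\lfloor\frac{n-1}{3}\rfloor$) a head-head (decreasing) chain covering one band of edge-lengths and a head-tail (increasing) chain covering a complementary band, chosen so that every edge of $K_n$ lands in exactly one page. After fixing the assignment I would verify two things: (i)~that within each page the head-head edges are pairwise non-crossing and the head-tail edges are pairwise non-nested (each is automatic once I exhibit them as a single monotone chain), and (ii)~that no head-head and head-tail edge on the same page together complete the forbidden rique triple. Point (ii) is where I place the careful bookkeeping, since the sentence in the preliminaries warns that when a page mixes the increasing and decreasing types ``special care is needed to avoid configurations not appearing in a cylindric layout.''

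The main obstacle I expect is precisely this compatibility check between the decreasing and increasing chains on a common page: it is easy to pack the edges by length into $\lfloor\frac{n-1}{3}\rfloor$ groups, but ensuring that the stack part and the queue part of each group can genuinely coexist in one rique — i.e.\ that no three edges realize $a\prec b\prec c\prec b'\prec\{a',c'\}$ across the two chains — requires choosing the band boundaries so the increasing chain lies entirely ``outside'' the span of the decreasing chain in the relevant sense. I would handle this by a disciplined choice of which length-classes go head-head versus head-tail within each page and by an induction or direct case analysis on the three edges' positions, reducing the forbidden-pattern test to a small number of inequalities on the indices. Once the construction is pinned down, the final edge count — showing that $\lfloor\frac{n-1}{3}\rfloor$ pages suffice to cover all $\binom{n}{2}$ edges — is a routine summation, consistent with the density ceiling of \cref{lem:rique-density}.
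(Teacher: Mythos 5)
You have correctly identified the framework the paper uses---the grid representation, with each rique page realized as one monotonically decreasing (head-head) chain plus one monotonically increasing (head-tail) chain---and you have correctly located the crux, namely the compatibility of the two chains sharing a page. But the proposal stops exactly where the proof has to begin: no explicit edge-to-page assignment is ever written down, and every step carrying actual content (``chosen so that every edge lands in exactly one page,'' ``a disciplined choice of which length-classes go head-head versus head-tail,'' ``an induction or direct case analysis on the three edges' positions'') is deferred to future work. For an upper-bound theorem whose entire substance is an explicit construction, this is a genuine gap rather than a routine omission: nothing in the proposal could be checked for correctness, because the object to be checked is never specified.

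Moreover, the specific construction you gesture at---banding the edges by their length $j-i$ into blocks of roughly three diagonals per page---is not what the paper does and is unlikely to work as stated. In the paper's layout the pages are not length-bands at all: page~1 already contains the full star $\{(v_1,v_j):j=2,\dots,n\}$, i.e., edges of every length from $1$ to $n-1$, and the generic page is an L-shaped/staircase region of the grid; in addition the construction needs up to seven ``special'' pages at the beginning and end that deviate from the uniform pattern, together with a three-way case split on $n \bmod 3$. That case split is also where the bound $\lfloor\frac{n-1}{3}\rfloor$ actually comes from ($\frac{n}{3}-1$ pages when $3$ divides $n$, and $\lfloor\frac{n}{3}\rfloor$ pages otherwise), not from folding spine edges into existing pages as you conjecture. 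The amount of irregularity the correct construction requires is strong evidence that the clean partition you describe would either violate the forbidden pattern $a \prec b \prec c \prec b' \prec \{a',c'\}$ across the two chains of some page or fail to cover all $\binom{n}{2}$ edges with only $\lfloor\frac{n-1}{3}\rfloor$ pages; at a minimum you would need to exhibit the assignment explicitly and verify both the coverage count and the pairwise/triple-wise compatibility before the argument constitutes a proof.
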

\begin{proof}
For our proof, we distinguish three cases, namely, $n\bmod 3 \in \{0,1,2\}$. We describe each of these cases separately in the following. 

\medskip\noindent\textbf{Case 1:} $n\bmod 3 = 0$.
We start our proof assuming $n\bmod 3 = 0$ and we prove that $K_n$ admits a rique layout $\mathcal{L}$ with $\frac{n}{3}-1$ riques. Our construction contains seven ``special'' pages, namely, the ones in $\{1,2,3,4,\frac{n}{3}-3, \frac{n}{3}-2, \frac{n}{3}-1$\}; blue, red, green, dark-purple, gray, light-purple and yellow in \cref{fig:k30complete}. The remaining pages of $\mathcal{L}$ are uniform.  

\smallskip\noindent Page 1 of $\mathcal{L}$ contains the following \textcolor{blue}{$2n$} edges; see~\cref{fig:01}:
\begin{itemize}[-]
\setlength\itemsep{0em}
\item $\{(v_1,v_j), j=2,\dots, n\}_{ht}$; dark red in~\cref{fig:01}, 
\item $\{(v_i,v_n), i=2,\dots,\frac{n}{3}\}_{ht}$; red in~\cref{fig:01}, 
\item $\{(v_{\frac{n}{3}},v_j),j=\frac{n}{3}+1,\dots,\frac{2n}{3}+1\}_{hh}$; light red in~\cref{fig:01}, 
\item $\{(v_{\frac{2n}{3}+1},v_j),j =\frac{2n}{3}+2,\dots,n\}_{hh}$; blue in~\cref{fig:01}, 
\item $\{(v_{n-1},v_n)\}_{hh}$; light blue in~\cref{fig:01} .
\end{itemize}
\begin{figure}[h!]
    \centering
    \includegraphics[page=1]{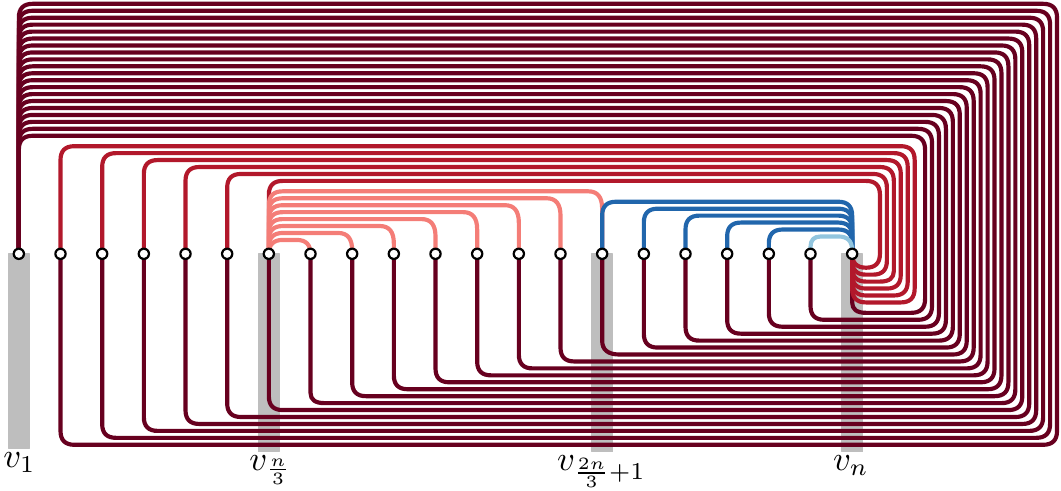}
    \caption{Page $1$ of $\mathcal{L}$ when $n\bmod 3 = 0$.}
 \label{fig:01}
\end{figure}
\smallskip\noindent Page 2 of $\mathcal{L}$ contains the following \textcolor{blue}{$2n-7$} edges: 
\begin{itemize}[-]
\setlength\itemsep{0em}
\item $\{(v_2,v_j), j=3,\dots, n-1\}_{ht}$; dark red in~\cref{fig:02},
\item $\{(v_i,v_{n-1}), i=3,\dots,\frac{n}{3}+1\}_{ht}$; red in~\cref{fig:02},
\item $\{(v_{\frac{n}{3}+1},v_n)\}_{ht}$; orange in~\cref{fig:02},
\item $\{(v_{\frac{n}{3}+1},v_j),j = \frac{n}{3}+2,\dots, \frac{2n}{3}\}_{hh}$; blue in~\cref{fig:02},
\item $\{(v_{\frac{2n}{3}},v_j),j = \frac{2n}{3}+1,\dots, n\}_{hh}$; light blue in~\cref{fig:02}.
\end{itemize}
\begin{figure*}[h!]
	\centering
	\includegraphics[page=2]{figures/figarxiv/mod0_p.pdf}
    \caption{Page $2$ of $\mathcal{L}$ when $n\bmod 3 = 0$.}
 \label{fig:02}
\end{figure*}
\smallskip\noindent Page 3 of $\mathcal{L}$ contains the following \textcolor{blue}{$2n-5$} edges:
\begin{itemize}[-]
\setlength\itemsep{0em}
\item $\{(v_3,v_j), j=4,\dots, n-2\}_{ht}$; dark red in~\cref{fig:03},
\item $\{(v_i,v_{n-2}), i=4,\dots,\frac{n}{3}+1\}_{ht}$; red in~\cref{fig:03},
\item $\{(v_{\frac{2n}{3}+2},v_j),j = n-2,\dots, n\}_{ht}$; orange in~\cref{fig:03},
\item $\{(v_{\frac{n}{3}+1},v_{\frac{2n}{3}+1})\}_{hh}$; blue in~\cref{fig:03},
\item $\{(v_{\frac{n}{3}+2},v_j),j = \frac{2n}{3}-1,\frac{2n}{3},\frac{2n}{3}+1\}_{hh}$; light blue in~\cref{fig:03},
\item $\{(v_{\frac{n}{3}+3},v_j),j = \frac{n}{3}+4,\dots,\frac{2n}{3}-1\}_{hh}$; pink in~\cref{fig:03},
\item $\{(v_{\frac{2n}{3}+2},v_j),j = \frac{2n}{3}+3,\dots,n-3\}_{hh}$; light red in~\cref{fig:03},
\item $\{(v_{n-3},v_j),j = n-2,n-1,n\}_{hh}$; light orange in~\cref{fig:03}.
\end{itemize}
\begin{figure*}[h!]
	\centering
	\includegraphics[page=3]{figures/figarxiv/mod0_p.pdf}
    \caption{Page $3$ of $\mathcal{L}$ when $n\bmod 3 = 0$.}
 \label{fig:03}
\end{figure*}
\smallskip\noindent For $p=4,\dots,\frac{n}{3}-4$, page $p$ of $\mathcal{L}$ contains the following \textcolor{blue} {$\frac{n}{3}-2p+3$} edges: 
\begin{itemize}[-]
\setlength\itemsep{0em} 
\item $\{(v_p,v_j),j= p+1,\dots,n-p+1\}_{ht}$; dark red in~\cref{fig:04},
\item $\{(v_i,v_j),i=p+1,\dots \frac{n}{3}+1, j=n-p+1\}_{ht}$; red in~\cref{fig:04}, 
\item $\{(v_i,v_j),i= \frac{n}{3}+(p+1), j=n-p+1, \dots, n\}_{ht}$; pink in~\cref{fig:04}, 
\item $\{(v_i,v_j),i= \frac{n}{3}+(p+1), j=\frac{2n}{3}+(p-2), \dots, n-p\}_{hh}$; blue in~\cref{fig:04},
\item $\{(v_i,v_j),i= n-p+1, j=n-p, \dots, n\}_{hh}$; light blue in~\cref{fig:04}, 
\item $\{(v_i,v_j),i= \frac{n}{3}+(p+2), j= \frac{n}{3}+(p+3), \dots, \frac{2n}{3}+(p-2)\}_{hh}$; orange in~\cref{fig:04}.
\end{itemize}
\begin{figure*}[h!]
	\centering
	\includegraphics[page=4]{figures/figarxiv/mod0_p.pdf}
    \caption{Page $p=4,\dots,\frac{n}{3}-4$ of $\mathcal{L}$ when $n\bmod 3 = 0$.}
 \label{fig:04}
\end{figure*}
\smallskip\noindent Page $\frac{n}{3}-3$ of $\mathcal{L}$ contains the following \textcolor{blue}{$\frac{4n}{3}+6$} edges: 
\begin{itemize}[-]
\setlength\itemsep{0em} 
\item $\{(v_{\frac{n}{3}-3},v_j), j= \frac{n}{3}-2,\dots, \frac{2n}{3}+4\}_{ht}$; dark red in~\cref{fig:05}, 
\item $\{(v_i,v_{\frac{2n}{3}+4}), i= \frac{n}{3}-2,\dots, \frac{n}{3}+1\}_{ht}$; red in~\cref{fig:05}, 
\item $\{(v_{\frac{n}{3}+3},v_j), j= \frac{2n}{3}+4,\dots, n-1\}_{ht}$; light blue in~\cref{fig:05}, 
\item $\{(v_{\frac{2n}{3}+3},v_j), j= n-1,n\}_{ht}$; pink in~\cref{fig:05},
\item $\{(v_{\frac{n}{3}+3},v_j), j= \frac{2n}{3},\dots, \frac{2n}{3}+3\}_{hh}$; dark blue in~\cref{fig:05},
\item $\{(v_{\frac{n}{3}+4},v_j), j= \frac{n}{3}+5,\dots, \frac{2n}{3}\}_{hh}$; orange in~\cref{fig:05}, 
\item $\{(v_{\frac{2n}{3}+3},v_j), j= \frac{2n}{3}+3, \dots, n-2\}_{hh}$; red in~\cref{fig:05}, 
\item $\{(v_{n-2},v_j), j= n-1,n\}_{hh}$; dark orange in~\cref{fig:05}.
\end{itemize}
\begin{figure*}[h!]
	\centering
	\includegraphics[page=5]{figures/figarxiv/mod0_p.pdf}
    \caption{Page $\frac{n}{3}-3$ of $\mathcal{L}$ when $n\bmod 3 = 0$.}
 \label{fig:05}
\end{figure*}
\smallskip\noindent Page $\frac{n}{3}-2$ of $\mathcal{L}$ contains the following \textcolor{blue}{$\frac{4n}{3}+3$} edges: 
\begin{itemize}[-]
\setlength\itemsep{0em} 
\item $\{(v_{\frac{n}{3}-2},v_j), j= \frac{n}{3}-1,\dots, \frac{2n}{3}+3\}_{ht}$; dark red in~\cref{fig:06}, 
\item $\{(v_i,v_{\frac{2n}{3}+3}), i= \frac{n}{3}-1,\dots, \frac{n}{3}+1\}_{ht}$; red in~\cref{fig:06}, 
\item $\{(v_{\frac{n}{3}+2},v_j), j=\frac{2n}{3}+3,\dots, n\}_{ht}$; light  red in~\cref{fig:06}, 
\item $\{(v_{\frac{n}{3}+3},v_n)_{ht}$; pink in~\cref{fig:06}, 
\item $\{(v_{\frac{n}{3}+4},v_j), j= \frac{2n}{3}+1,\dots, n\}_{hh}$; dark orange in~\cref{fig:06}, 
\item $\{(v_{\frac{n}{3}+5},v_j), j= \frac{n}{3}+6,\dots, \frac{2n}{3}+1\}_{hh}$; orange in~\cref{fig:06}.
\end{itemize}
\begin{figure*}[h!]
	\centering
	\includegraphics[page=6]{figures/figarxiv/mod0_p.pdf}
    \caption{Page $\frac{n}{3}-2$ of $\mathcal{L}$ when $n\bmod 3 = 0$.}
 \label{fig:06}
\end{figure*}

\smallskip\noindent Page $\frac{n}{3}-1$ of $\mathcal{L}$ contains the following \textcolor{blue}{$n+9$} edges: 
\begin{itemize}[-]
\setlength\itemsep{0em} 
\item $\{(v_{\frac{n}{3}-1},v_j), j= \frac{n}{3},\dots, \frac{2n}{3}+2\}_{ht}$; dark red in~\cref{fig:07}, 
\item $\{(v_i,v_{\frac{2n}{3}+2}), i= \frac{n}{3},\dots, \frac{n}{3}+2\}_{ht}$;red in~\cref{fig:07}, 
\item $\{(v_{\frac{2n}{3}-2},v_j), j=n-5,\dots, n\}_{ht}$; light red in~\cref{fig:07}, 
\item $\{(v_{\frac{n}{3}+2},v_j), j= \frac{n}{3}+3,\dots, \frac{2n}{3}-2\}_{hh}$; orange in~\cref{fig:07}, 
\item $\{(v_{\frac{2n}{3}-1},v_j), j= \frac{2n}{3},\dots, n\}_{hh}$; light orange in~\cref{fig:07}.
\end{itemize}
 
\begin{figure*}[h!]
	\centering
	\includegraphics[page=7]{figures/figarxiv/mod0_p.pdf}
    \caption{Page $\frac{n}{3}-1$ of $\mathcal{L}$ when $n\bmod 3 = 0$.}
 \label{fig:07}
\end{figure*}
 
\begin{figure*}[p]
    \begin{subfigure}[b]{.48\linewidth}
    \centering
    \includegraphics[scale=0.9,page=3]{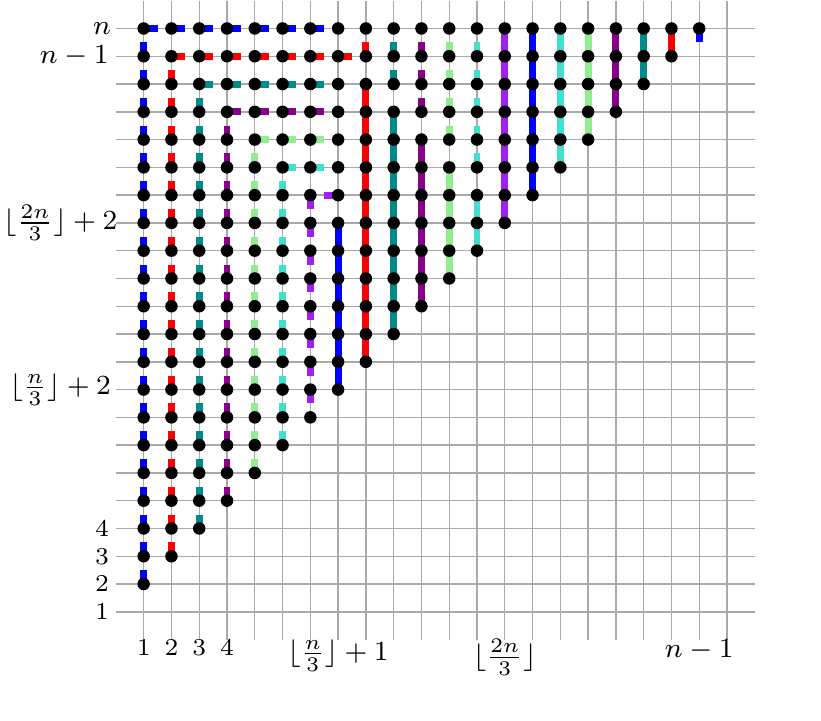}
    \caption{}
    \label{fig:k30complete}
    \end{subfigure}
    \begin{subfigure}[b]{.48\linewidth}
    \centering
    \includegraphics[scale=0.9,page=1]{figures/CompleteRique.pdf}
    \caption{}
    \label{fig:k22complete}
    \end{subfigure}
    \begin{subfigure}[b]{\linewidth}
    \centering
    \includegraphics[scale=0.9,page=2]{figures/CompleteRique.pdf}
    \caption{}
    \label{fig:k23complete}
    \end{subfigure}
    \caption{Illustration of the grid representation of a rique layout of $K_{n}$ with
    (a)~$n\bmod3=0$
    (b)~$n\bmod3=1$, and 
    (c)~$n\bmod3=2$, in which paths of the same color correspond to the same rique. The points of the grid that are covered by a solid (dashed) path are head-head (head-tail, respectively).}
\end{figure*}

\noindent So, in total $\mathcal{L}$ has $(2n-1)+(2n-6)+(2n-4)+(\frac{5n}{3}-5) + \sum_{p=5}^{\frac{n}{3}-4}(\frac{5n}{3}-p+1) + (n+14)+(\frac{4n}{3}+3)+(\frac{4n}{3}+3) = \frac{n(n-1)}{2}$ edges. Since no two edges have been assigned to the same rique and all edges in the same rique form a cylindric layout, it follows that the rique number of $K_n$ is at most $\lfloor\frac{n-1}{3}\rfloor$ when $n\bmod 3 =0$.

\medskip\noindent\textbf{Case 2:} $n\bmod 3 = 1$.
In this case, we show that the $K_n$ admits a rique layout with $\lfloor \frac{n}{3} \rfloor$ riques. As in Case~1, our construction contains again ``special'' pages, namely, the ones in $\{1,2, \lfloor \frac{n-1}{3} \rfloor$\}; blue, red and purple in \cref{fig:k22complete}. The remaining pages of $\mathcal{L}$ are uniform.  

\smallskip\noindent Page $1$ of $\mathcal{L}$ contains the following \textcolor{blue}{$2n-1$} edges:
\begin{itemize}[-]
\setlength\itemsep{0em}
\item $ \{(u_1,v_j), j=2,\dots, n\}_{ht}$; dark red in~\cref{fig:11},
\item $\{(u_i,v_n), i=2,\dots, \frac{n-1}{3}+1\}_{ht}$; red in~\cref{fig:11},
\item $\{(u_{\frac{n-1}{3}+1},v_j),j = \frac{n-1}{3}+2, \dots,\frac{2n-2}{3}+1\}_{hh}$; light red in~\cref{fig:11},
\item $\{(u_{\frac{2n-2}{3}+1},v_j),j = \frac{2n-2}{3}+2, \dots,n\}_{hh}$; blue in~\cref{fig:11},
\item $\{(u_{n-1},v_{n})\}_{hh}$; light blue in~\cref{fig:11}.
\end{itemize}
 
\begin{figure*}[h!]
	\centering
	\includegraphics[page=1]{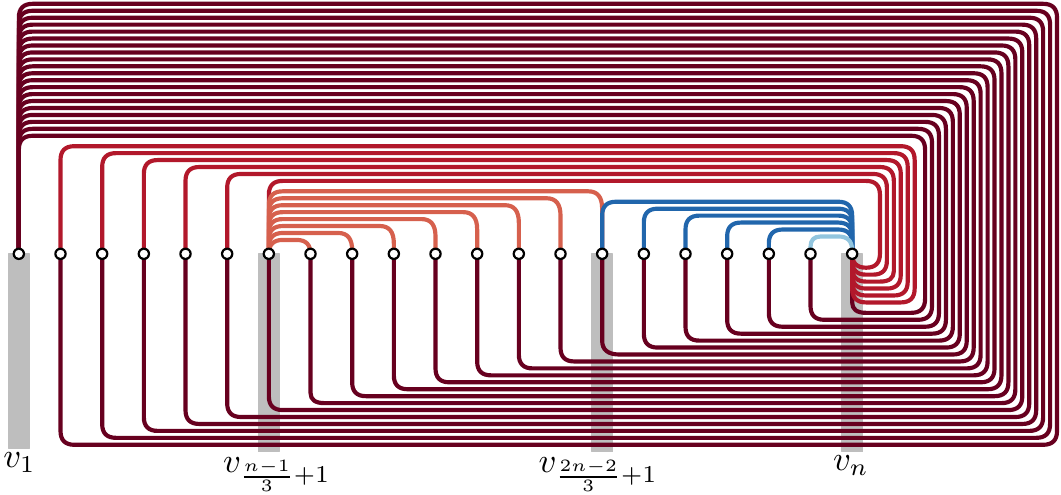}
    \caption{Page $1$ of $\mathcal{L}$ when $n\bmod 3 = 1$.}
 \label{fig:11}
\end{figure*}

\begin{figure*}[b!]
	\centering
	\includegraphics[page=2]{figures/figarxiv/mod1.pdf}
    \caption{Page $2$ of $\mathcal{L}$ when $n\bmod 3 = 1$.}
 \label{fig:12}
\end{figure*}

\smallskip\noindent Page $2$ of $\mathcal{L}$ contains the following \textcolor{blue}{$2n-4$} edges:
\begin{itemize}[-]
\setlength\itemsep{0em}
\item $\{(u_2,v_j), j=3,\dots, n-1\}_{ht}$; dark red in~\cref{fig:12},
\item $\{(u_i,v_{n-1}), i=3,\dots,\frac{n-1}{3}+1\}_{ht}$; red in~\cref{fig:12},
\item $\{(u_{\frac{n-1}{3}+2},v_j),j=n-1,n\}_{ht}$; light red in~\cref{fig:12},
\item $\{(u_{\frac{n-1}{3}+2},v_j),j = \frac{n-1}{3}+3 \dots, n-2\}_{hh}$; blue in~\cref{fig:12},
\item $\{(u_{n-2},v_j),j=n-1,n\}_{hh}$; light blue in~\cref{fig:12}.
\end{itemize}

\smallskip\noindent For $p=3,\dots,\frac{n-1}{3}-1$, page $p$ of $\mathcal{L}$ contains the following \textcolor{blue} {$2n-3p+2$} edges: 
\begin{itemize}[-]
\setlength\itemsep{0em} 
\item $\{(u_p,v_j), j= p+1,\dots, n-p+1\}_{ht}$; dark red in~\cref{fig:1p}, 
\item $\{(u_i,v_{n-p+1}),i=p+1,\dots,\frac{n-1}{3} +1\}_{ht}$; red in~\cref{fig:1p},
\item $\{(u_{\frac{n-1}{3} +p},v_j), j=n-p+1, \dots, n\}_{ht}$; orange in~\cref{fig:1p},
\item $\{(u_{\frac{n-1}{3} +p},v_j), j= \frac{n-1}{3}  + (p+1), \dots, n-p\}_{hh}$; blue in~\cref{fig:1p},
\item $\{(u_{n-p},v_j),j=n-p+1, \dots, n\}_{hh}$; light blue in~\cref{fig:1p}.
\end{itemize}
 
\begin{figure*}[h!]
	\centering
	\includegraphics[page=3]{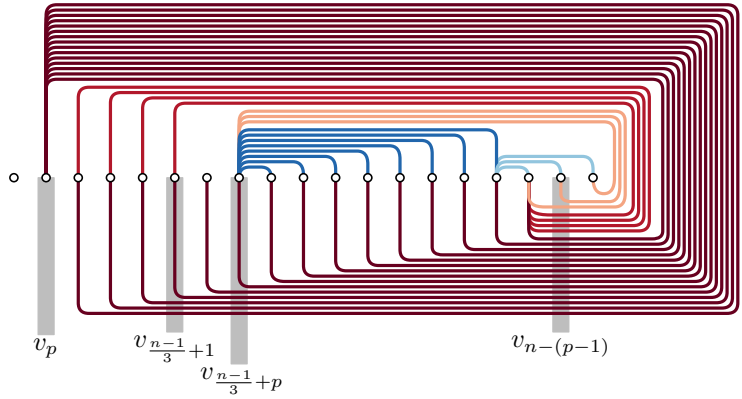}
    \caption{Page $p=3,\dots,\frac{n-1}{3}-1$ of $\mathcal{L}$ when $n\bmod 3 = 1$.}
 \label{fig:1p}
\end{figure*}
 
\smallskip\noindent Page $\lfloor \frac{n-1}{3} \rfloor$ of $\mathcal{L}$ contains the following \textcolor{blue}{$2(\frac{n-1}{3})+4$} edges:
\begin{itemize}[-]
\setlength\itemsep{0em} 
\item $\{(u_{ \frac{n-1}{3} },v_j), j=\frac{n-1}{3}+1,\dots, \frac{2n-2}{3}+2)\}_{ht}$; dark red in~\cref{fig:14}, 
\item $\{(u_{\frac{n-1}{3}+1},v_{\frac{2n-2}{3}+2}),\}_{ht}$; red in~\cref{fig:14},
\item $\{(u_{ \frac{2n-2}{3} },v_j),j= \frac{2n-2}{3}+1 , \dots, n\}_{hh}$; orange in~\cref{fig:14}.
\end{itemize}
 
\begin{figure*}[h!]
	\centering
	\includegraphics[page=4]{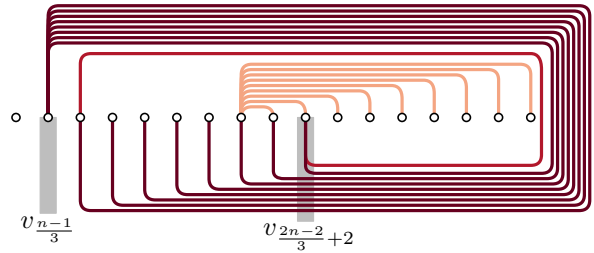}
    \caption{Page $\lfloor\frac{n-1}{3}\rfloor$ of $\mathcal{L}$ when $n\bmod 3 = 1$.}
 \label{fig:14}
\end{figure*}
 
\noindent So, when $n\bmod3 = 1$,  $\mathcal{L}$ has $2n-1+2n-4+\sum_{p=3}^{ \frac{n-1}{3} -1}(2n-3p+2)+2(\frac{n-1}{3})+4 = \frac{n(n-1)}{2}$ edges. Since no two edges have been assigned to the same rique and all edges in the same rique form a cylindric layout, it follows that the rique number of $K_n$ is at most $\lfloor\frac{n-1}{3}\rfloor$ when $n\bmod3 =1$.

\medskip\noindent\textbf{Case 3:} $n\bmod 3 = 2$. We continue with the case $n\bmod3 = 2$. In this case, we show that the $K_n$ admits a rique layout with $\lfloor \frac{n}{3} \rfloor$ riques. As with the previous two cases, our construction contains again ``special'' pages, namely, the ones in $\{1,2$\}; blue and red in \cref{fig:k23complete}. The remaining pages of $\mathcal{L}$ are uniform. 

\smallskip\noindent Page $1$ of $\mathcal{L}$ contains the following \textcolor{blue}{$2n$} edges:
\begin{itemize}[-]
\setlength\itemsep{0em}
\item $ \{(u_1,v_j), j=2,\dots, n\}_{ht}$; dark red in~\cref{fig:21},
\item $\{(u_i,v_n), i=2,\dots, \frac{n-2}{3}+1\}_{ht}$; red in~\cref{fig:21},
\item $\{(u_{\frac{n-2}{3}+1},v_j),j = \frac{n-2}{3}+2, \dots,\frac{2n-4}{3}+2\}_{hh}$; light red in~\cref{fig:21},
\item $\{(u_{\frac{2n-4}{3}+2},v_j),j =\frac{2n-4}{3}+3, \dots,n\}_{hh}$; blue in~\cref{fig:21},
\item $\{(u_{n-1},v_{n})\}_{hh}$; light blue in~\cref{fig:21}.
\end{itemize}
 
\begin{figure*}[h!]
	\centering
	\includegraphics[page=1]{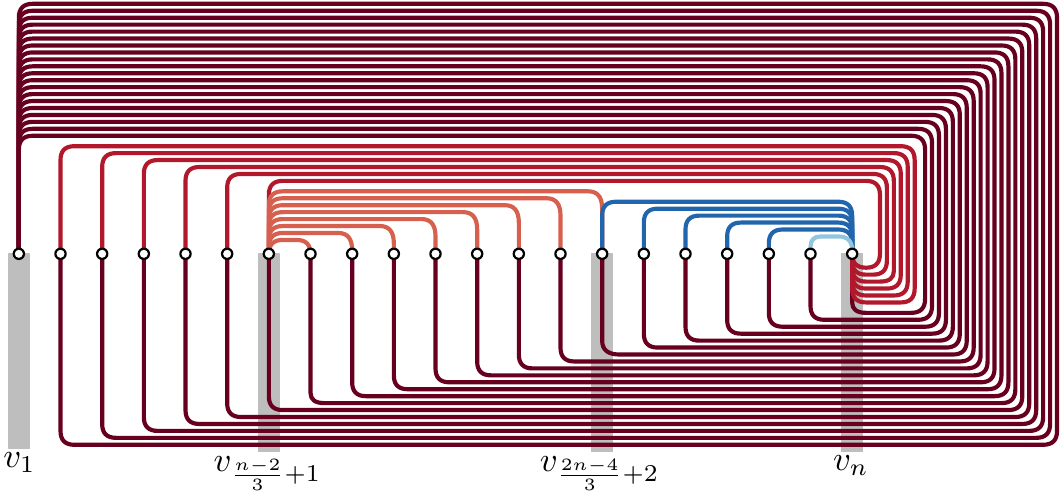}
    \caption{Page $1$ of $\mathcal{L}$ when $n\bmod 3 = 2$.}
 \label{fig:21}
\end{figure*}
 
\begin{figure*}[b!]
	\centering
	\includegraphics[page=2]{figures/figarxiv/mod2.pdf}
    \caption{Page $2$ of $\mathcal{L}$ when $n\bmod 3 = 2$.}
 \label{fig:22}
\end{figure*}
 
\smallskip\noindent Page $2$ of $\mathcal{L}$ contains the following \textcolor{blue}{$2n-4$} edges:
\begin{itemize}[-]
\setlength\itemsep{0em}
\item $\{(u_2,v_j), j=3,\dots, n-1\}_{ht}$; dark red in~\cref{fig:22},
\item $\{(u_i,v_{n-1}), i=3,\dots,\frac{n-2}{3}+2\}_{ht}$; red in~\cref{fig:22},
\item $\{(u_{\frac{n-2}{3}+2},v_j),j=n-1,n\}_{ht}$; light red in~\cref{fig:22},
\item $\{(u_{\frac{n-2}{3}+2},v_j),j = \frac{n-2}{3}+3 \dots, n-2\}_{hh}$; blue in~\cref{fig:22},
\item $\{(u_{n-2},v_j),j=n-1,n\}_{hh}$; light blue in~\cref{fig:22}.
\end{itemize}

\smallskip\noindent For $p=3,\dots,\frac{n-2}{3}$, page $p$ of $\mathcal{L}$ contains the following \textcolor{blue} {$2n-3p+2$} edges: 
\begin{itemize}[-]
\setlength\itemsep{0em} 
\item $\{(u_p,v_j), j= p+1,\dots, n-p+1\}_{ht}$; dark red in~\cref{fig:2p}, 
\item $\{(u_i,v_{n-p+1}),i=p+1,\dots,  \frac{n-2}{3} +1\}_{ht}$; red in~\cref{fig:2p},
\item $\{(u_{\frac{n-2}{3} +p},v_j), j=n-p+1, \dots, n\}_{ht}$;light red in~\cref{fig:2p},
\item $\{(u_{\frac{n-2}{3} +p},v_j), j=  \frac{n-2}{3} + (p+1), \dots, n-p\}_{hh}$; blue in~\cref{fig:2p},
\item $\{(u_{n-p},v_j),j=n-p+1, \dots, n\}_{hh}$; light blue in~\cref{fig:2p}.
\end{itemize}
 
\begin{figure*}[h!]
	\centering
	\includegraphics[page=3]{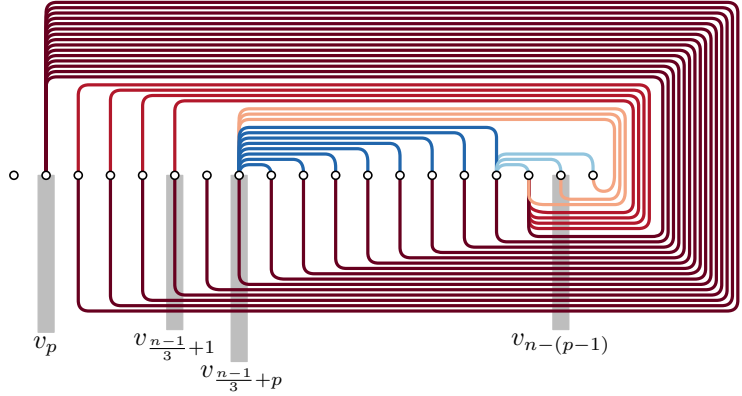}
    \caption{Page $p=3,\dots,\frac{n-2}{3}$ of $\mathcal{L}$ when $n\bmod 3 = 2$.}
 \label{fig:2p}
\end{figure*}
 
\noindent So, in total $\mathcal{L}$ has $2n+2n-4+\sum_{p=3}^{ \frac{n-2}{3} }(2n-3p+2) = \frac{n(n-1)}{2}$ edges. Since no two edges have been assigned to the same rique and all edges in the same rique form a cylindric layout, it follows that the rique number of $K_n$ is at most $\lfloor\frac{n-1}{3}\rfloor$ when $n\bmod 3 =2$.
\end{proof}

\begin{remark}\label{rem:bound-tightness}\normalfont
Using the SAT formulation that we present in \cref{sec:sat} we were able to show that the upper bound of \cref{thm:rique-complete} is tight for all values of $n \leq 30$. However, we were not able to show a matching lower bound. In view of these observations, we conjecture in \cref{sec:conclusions} that the rique-number of $K_n$ is exactly $\lfloor\frac{n-1}{3}\rfloor$.
\end{remark}

\section{Complete bipartite graphs}
\label{sec:bipartite}

In this section, we study the deque- and rique-numbers of the complete bipartite graph $K_{n,n}$. Let the two parts of $K_{n,n}$ be $A = \{a_1, \dots, a_n\}$ and $B = \{b_1,\dots,b_n\}$ with $|A| = |B| = n$. W.l.o.g., we may assume that in the computed layouts $a_1 \prec \dots \prec a_n$ and $b_1 \prec \dots \prec b_n$ holds.

\clearpage

\newcommand{\dequebipartite}{The deque-number of $K_{n,n}$ is at most $\lceil\frac{n}{3}\rceil$.}

\begin{theorem}\label{thm:deque-bipartite}
\dequebipartite
\end{theorem}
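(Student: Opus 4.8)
The plan is to argue via an explicit edge-to-deques assignment, working in the grid picture of \cref{sec:preliminaries}. First I would fix the vertex order so that all of $A$ precedes all of $B$, i.e.\ $a_1\prec\dots\prec a_n\prec b_1\prec\dots\prec b_n$, so that each edge $(a_i,b_j)$ is mapped to the grid point $(i,j)$ of $[1,n]\times[1,n]$. Under this order no two independent edges are disjoint: for $i<i'$ the edges $(a_i,b_j)$ and $(a_{i'},b_{j'})$ cross iff $j<j'$ and nest iff $j>j'$. Hence within a single deque the head-head and the tail-tail edges each form a monotonically decreasing staircase (drawn above, resp.\ below, the spine), while the head-tail and tail-head edges each form a monotonically increasing staircase, subject to the cylindric compatibility flagged at the end of \cref{sec:preliminaries}. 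So a single deque is exactly a union of at most two decreasing and at most two increasing staircases that fit together cylindrically, and the task becomes: tile the $n\times n$ grid of edges with $\lceil\frac{n}{3}\rceil$ such unions.

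The motivation for the construction is to see precisely where the trivial bound loses. Pairing the $\lfloor\frac{2n}{3}\rfloor+1$ rainbows of an optimal stack layout of $K_{n,n}$~\cite{DBLP:journals/jct/EnomotoNO97} as the head-head and tail-tail parts of deques reproves \cref{obs:deque-stack-bound} and yields $\lceil\frac{\lfloor 2n/3\rfloor+1}{2}\rceil$ pages, but it uses only the two \emph{decreasing} staircases per deque and wastes all the increasing (queue-type) capacity. A short case check shows this pairing already equals $\lceil\frac{n}{3}\rceil$ when $n\bmod 3\in\{1,2\}$, and exceeds it by exactly one only when $n\equiv 0\pmod 3$. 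The plan is therefore to recover that single page in the $n\equiv 0$ case by additionally loading increasing staircases: take the rainbows, pair them up, and redistribute the one leftover rainbow's edges as head-tail and tail-head edges among the already-formed deques, using the queue capacity that the pairing left idle.

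To turn this into a verifiable layout I would, in the style of the rique construction in \cref{sec:complete}, make \emph{most} deques uniform---each responsible for a band of the grid realized by one decreasing staircase above the spine, one below, and the two increasing staircases filling the interface between neighbouring bands---and keep a constant number of \emph{special} deques that absorb the edges near the corners of the grid (the short rows/columns incident to $a_1,a_n,b_1,b_n$), which is exactly where the leftover rainbow's edges are rerouted. For each deque I would then check two things: (i)~each of its four edge sets is monotone in the required direction (decreasing for the $hh$/$tt$ parts, increasing for the $ht$/$th$ parts), and (ii)~its head-tail and tail-head edges are mutually cylindric, i.e.\ they avoid the configurations that cannot be drawn on the cylinder. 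Finally I would confirm that the assignment partitions all $n^2$ edges (a counting identity summing the per-page sizes to $n^2$) and uses exactly $\lceil\frac{n}{3}\rceil$ pages.

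The main obstacle is step (ii): the preliminaries explicitly warn that not every pair of increasing staircases can coexist with the two decreasing ones on a single deque, so the freedom to "mop up" leftover edges as $ht$/$th$ edges is genuinely constrained. Guaranteeing cylindric compatibility while still packing densely enough to eliminate a page, and arranging the special corner deques to absorb \emph{exactly} the residual edges so that the page count lands on $\lceil\frac{n}{3}\rceil$ rather than $\lceil\frac{n}{3}\rceil+1$, is where the ``rather complicated'' bookkeeping lives, and I expect it to dominate the proof; the monotonicity and counting checks should be routine by comparison.
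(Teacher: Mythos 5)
There is a genuine gap here, on two levels. First, the route you propose is internally inconsistent. You fix the separated order $a_1\prec\dots\prec a_n\prec b_1\prec\dots\prec b_n$ and then want to start from the $\lfloor 2n/3\rfloor+1$ rainbows of an optimal stack layout of $K_{n,n}$. But under the separated order every pair of independent edges crosses or nests, so a stack is exactly a rainbow, i.e.\ a decreasing staircase in the grid; the diagonal edges $(a_i,b_i)$ then form an $n$-twist, each rainbow contains at most one of them, and hence the separated order forces at least $n$ stacks. The $\lfloor 2n/3\rfloor+1$-stack layout of Enomoto et al.\ lives on a different, interleaved vertex order, under which your clean ``decreasing/increasing staircase'' picture no longer holds. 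So either the rainbows you want to pair do not exist (separated order), or the grid structure you rely on does not apply (ENO order). Starting from $n$ rainbows under the separated order, pairing gives only about $n/2$ deques, and closing the gap to $\lceil n/3\rceil$ would require moving roughly a third of all edges into the head-tail/tail-head parts -- far more than ``one leftover rainbow'' -- with no argument that this is cylindrically feasible. It is telling that the paper does not use the separated order at all: its layout is built on the four-block order $a_1\prec\dots\prec a_{n/3}\prec b_1\prec\dots\prec b_{2n/3}\prec a_{n/3+1}\prec\dots\prec a_n\prec b_{2n/3+1}\prec\dots\prec b_n$.

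Second, even granting the setup, the proposal defers precisely the part that constitutes the proof. Your observation that the bound from \cref{obs:deque-stack-bound} already equals $\lceil n/3\rceil$ for $n\bmod 3\in\{1,2\}$ and exceeds it by exactly one for $n\equiv 0\pmod 3$ is correct and matches the paper's framing of the result as a constant improvement, and your checklist (monotonicity of each of the four parts, cylindric compatibility of the wrap-around parts, a counting identity summing to $n^2$) is the right verification scheme. But the theorem is proved only by actually exhibiting the edge-to-deque assignment -- in the paper this is an explicit, page-by-page construction with about ten distinct page templates occupying several pages -- and you acknowledge that this, together with the cylindric-compatibility bookkeeping, ``dominates the proof'' and is left undone. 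As written, the proposal is a strategy for finding a proof, not a proof, and the strategy as stated cannot be executed without first abandoning the separated order.
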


\begin{proof}
Assume that $n\bmod3 = 0$; the remaining cases follow from this one. We describe a deque layout $\mathcal{L}$ of $K_{n,n}$ with $\frac{n}{3}$ deques, in which the underlying order is:
$a_1 \prec \dots \prec a_{n/3} \prec b_1 \dots \prec b_{2n/3} \prec a_{n/3+1} \prec \dots \prec a_n \prec b_{2n/3+1} \prec \dots \prec b_n$; for an illustration of the corresponding matrix representation refer to~\cref{fig:bipartite-matrix-deque} and observe that it consists of four blocks, e.g., the middle block corresponds to the edges between $a_{n/3+1},\dots, a_{n}$ and $b_1,\dots,b_{2n/3}$.

\smallskip\noindent Page 1 of $\mathcal{L}$ contains the following \textcolor{blue}{$2n+16$} edges:
 
\begin{itemize}[-]
\setlength\itemsep{0em}
    \item $\{(a_1, b_j), j = \frac{2n}{3} + 4, \dots, n\}_{hh}$; dark red in~\cref{fig:d1}, 
    \item $\{(a_1, b_6)\}_{hh}$; red in~\cref{fig:d1},
    \item $\{(a_2, b_j), j = 1, \dots, 6\}_{hh}$; light red in~\cref{fig:d1},
    \item $\{(a_{\frac{n}{3} + 8}, b_j), j = 8, 9, 10\}_{hh}$; orange in~\cref{fig:d1},
    \item $\{(a_i, b_{10}), i = \frac{n}{3} + 1, \dots, \frac{n}{3} + 7\}_{hh}$; light orange in~\cref{fig:d1},
    \item $\{(a_i, b_{\frac{2n}{3} + 4}), i = \frac{n}{3} + 8, \dots, n\}_{hh}$; green in~\cref{fig:d1},
    \item $\{(a_2, b_j), j = n - 2, \dots, n\}_{tt}$; dark blue in~\cref{fig:d1},
    \item $\{(a_i, b_{\frac{2n}{3} + 5}), i = \frac{n}{3} - 2,\dots, \frac{n}{3}\}_{tt}$; blue in~\cref{fig:d1},
    \item $ \{(a_{\frac{n}{3}}, b_j), j = \frac{2n}{3} + 1, \frac{2n}{3} + 2)\}_{tt}$; light blue in~\cref{fig:d1}.
    \item $\{(a_n, b_1), (a_n, b_2), (a_{n - 1}, b_2)\}_{tt}$; pink in~\cref{fig:d1},
    \item $\{(a_i, b_j), (a_{i - 1}, b_j), i = n - 1, \dots, \frac{2n}{3}+3, j = 4, \dots, \frac{n}{3}\}_{tt}$; light pink in~\cref{fig:d1},
    \item $\{(a_i,b_{\frac{n}{3}}), i = \frac{2n}{3}+1,\frac{2n}{3}+2\}_{tt}$; dark orange in~\cref{fig:d1},
    \item $\{(a_{\frac{2n}{3}+1},b_j), j = \frac{n}{3}+1, \dots,\frac{n}{3}+3\}_{tt}$; dark green in~\cref{fig:d1},
    \item $\{(a_i,b_{\frac{n}{3}+3}), i = \frac{n}{3}+1, \dots,\frac{2n}{3}\}_{tt}$; green in~\cref{fig:d1}.
\end{itemize}

\begin{figure*}[h!]
	\centering	
    \includegraphics[page=5,width=\textwidth]{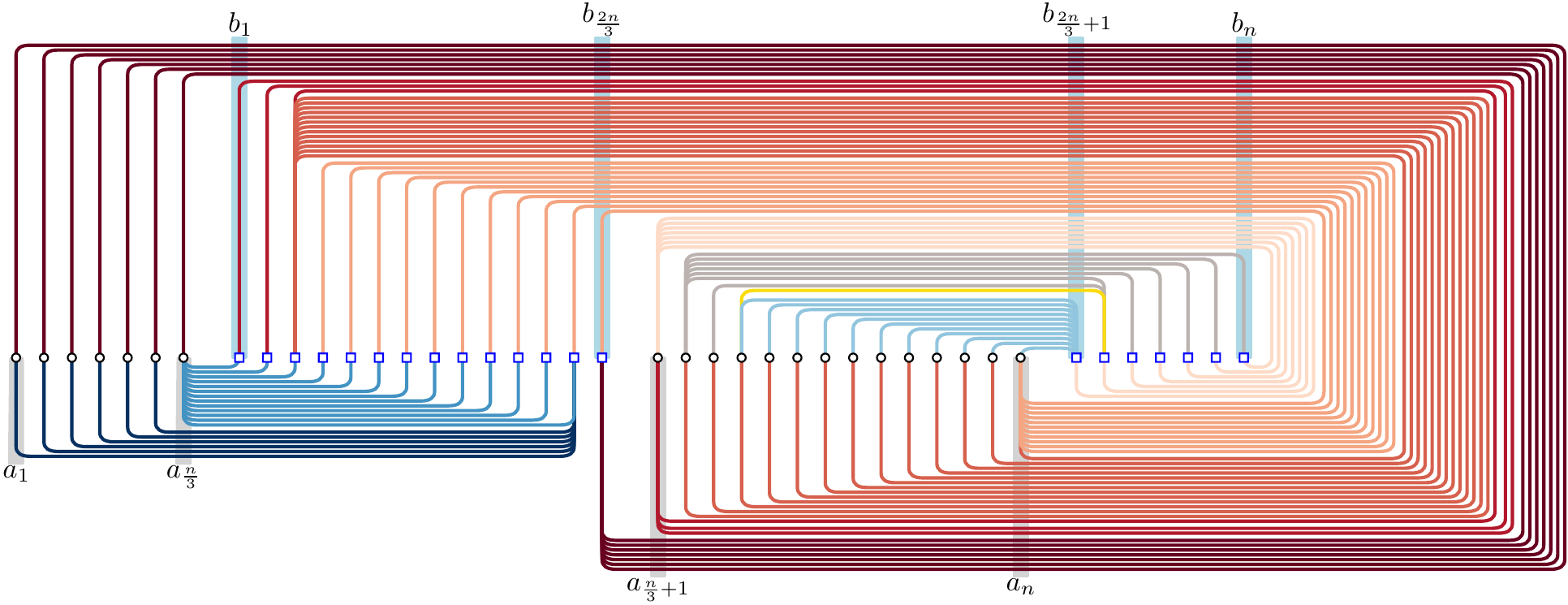}
    \caption{Page $1$ of $\mathcal{L}$.}
 \label{fig:d1}
\end{figure*}

\clearpage

\smallskip\noindent Page 2 of $\mathcal{L}$ contains the following \textcolor{blue}{$\frac{8n}{3}+5$} edges:

\begin{itemize}[-]
\setlength\itemsep{0em}
    \item $\{(a_1, b_{\frac{2n}{3} +3})\}_{ht}$; black in~\cref{fig:d2},
    \item $\{(a_2, b_j), j = \frac{2n}{3} + 3, \dots, n -3\}_{ht}$; dark red in~\cref{fig:d2},
    \item $\{(a_3, b_j), j = n - 3, \dots, n-1\}_{ht}$; red in~\cref{fig:d2},
    \item $\{(a_i, b_n), i = 3, \dots, \frac{n}{3}\}_{ht}$; dark orange in~\cref{fig:d2},
    \item $\{(a_i,b_1), i= n-3,n-4\}_{hh}$; light orange in~\cref{fig:d2},
    \item $\{(a_i,b_2), i= n-4,n-5\}_{hh}$; dark orange in~\cref{fig:d2},
    \item $\{(a_i,b_j),(a_{i-1},b_j) i= n-5,\dots,\frac{2n}{3}+1, j = 4,\dots, \frac{n}{3}-2\}_{hh}$; dark pink in~\cref{fig:d2},
    \item $\{(a_i,b_{\frac{n}{3}-2}), i=\frac{2n}{3}+1,\dots,\frac{2n}{3}-1\}_{hh}$; light pink in~\cref{fig:d2},
    \item$\{(a_{\frac{2n}{3}-1},b_j),j=\frac{n}{3}-1,\frac{n}{3}\}_{hh}$; light gray in~\cref{fig:d2},
    \item$\{(a_i,b_{\frac{n}{3}+1}), i=\frac{n}{3}+1,\dots,\frac{2n}{3}-1\}_{hh}$; gray in~\cref{fig:d2},
    \item $\{(a_i, b_n), i = n - 3, \dots, n\}_{hh}$; pink in~\cref{fig:d2}, 
    \item $\{(a_1, b_i), i = 1, \dots, 5\}_{tt}$; light green in~\cref{fig:d2}, 
    \item $\{(a_i,b_j),(a_i,b_{j+1}), i = n-1,\dots, \frac{2n}{3}+4, j = 5,\dots,\frac{n}{3}\}_{tt}$; green in~\cref{fig:d2},
    \item $\{(a_i,b_{\frac{n}{3}+1}), i = \frac{2n}{3}+2,\frac{2n}{3}+3\}_{tt}$; dark blue in~\cref{fig:d2},
    \item $\{(a_{\frac{2n}{3}+2},b_j), i = \frac{n}{3}+2,\frac{n}{3}+3\}_{tt}$; blue in~\cref{fig:d2},
    \item $\{(a_i,b_{\frac{n}{3}+4}), i = \frac{n}{3}+1,\dots,\frac{2n}{3}+2\}_{tt}$; light blue in~\cref{fig:d2}.
\end{itemize} 

\begin{figure*}[ h!]
	\centering	
    \includegraphics[page=7,width=\textwidth]{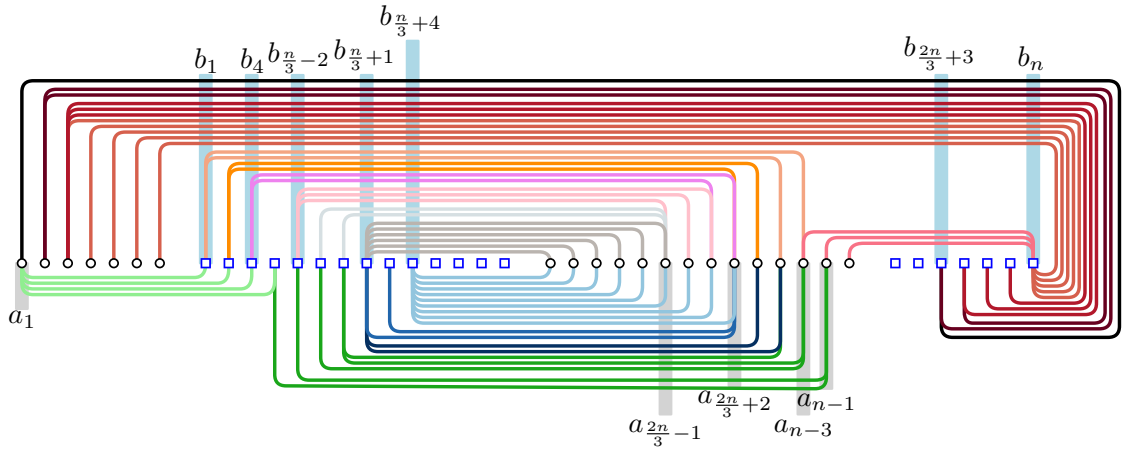}
    \caption{Page $2$ of $\mathcal{L}$.}
 \label{fig:d2}
\end{figure*}

\clearpage

\smallskip\noindent Page $p=3,4$ of $\mathcal{L}$ contains the following \textcolor{blue}{$\frac{8n}{3}+p+5$} edges:

\begin{itemize}[-]
\setlength\itemsep{0em}
    \item $\{(a_i, b_{2p+1}),i=1,\dots, p-1\}_{tt}$; green in~\cref{fig:dp},
    \item $\{(a_p, b_j), j = 1, \dots, 2p+1\}_{tt}$; dark green in~\cref{fig:dp},
    \item $\{(a_i, b_{\frac{2n}{3} + 5 - p}), i = 1, \dots, p-1 )\}_{ht}$; dark red in~\cref{fig:dp},
    \item $\{(a_p, b_j), j = \frac{2n}{3} + 5 - p, \dots, n - 1 - p\}_{ht}$; red in~\cref{fig:dp},
    \item $\{(a_{p + 1}, b_j), j = n - 1 - p, \dots, n + 1 - p\}_{ht}$; pink in~\cref{fig:dp},
    \item $\{(a_i, b_{n + 2 - p}), i = 1 + p, \dots, \frac{n}{3}\}_{ht}$; light pink in~\cref{fig:dp},
    \item $\{(a_i,b_1), i= n-2p+1,n-2p\}_{hh}$; purple in~\cref{fig:dp},
    \item $\{(a_i,b_2), i= n-2p,n-2p-1\}_{hh}$; orange in~\cref{fig:dp},
    \item $\{(a_i,b_j),(a_{i-1},b_j) i= n-2p-1,\dots,\frac{2n}{3}+4-p , j = 4,\dots, \frac{n}{3}-p-1\}_{hh}$; yellow in~\cref{fig:dp},
    \item $\{(a_i,b_{\frac{n}{3}-p}), i=\frac{2n}{3}-p+3,\dots,\frac{2n}{3}-p+1\}_{hh}$; dark pink in~\cref{fig:dp},
    \item$\{(a_{\frac{2n}{3}-p},b_j),j=\frac{n}{3}-p+1,\frac{n}{3}-p+2 \}_{hh}$; light yellow in~\cref{fig:dp},
    \item$\{(a_i,b_{\frac{n}{3}-p+3}), i=\frac{n}{3}+1,\dots,\frac{2n}{3}-p+1\}_{hh}$; dark yellow in~\cref{fig:dp},
    \item $\{(a_i,b_j),(a_i,b_{j+1}), i = n-1,\dots, \frac{2n}{3}+p+2, j = 5,\dots,\frac{n}{3}+p-2\}_{tt}$, light green in~\cref{fig:dp},
    \item $\{(a_i,b_{\frac{n}{3}+p-1}), i = \frac{2n}{3}+p,\frac{2n}{3}+p+1\}_{tt}$;blue in~\cref{fig:dp},
    \item $\{(a_{\frac{2n}{3}+p},b_j), i = \frac{n}{3}+p,\frac{n}{3}+p+1\}_{tt}$; dark blue in~\cref{fig:dp},
   \item $\{(a_i,b_{\frac{n}{3}+p+2}), i = \frac{n}{3}+1,\dots,\frac{2n}{3}+p\}_{tt}$; light blue in~\cref{fig:dp},
    \item $\{(a_{n + 2 - 2p}, b_j), j = n + 2 - p, \dots, n\}_{hh}$; light gray in~\cref{fig:dp},
    \item $\{(a_i, b_{n + 2 - p}), i = n + 3 - 2p, \dots, n\}_{hh}$; gray in~\cref{fig:dp},
    \item $\{(a_{n + 1 - 2p}, b_j), j = n + 2 - p, \dots, n\}_{ht}$; black in~\cref{fig:dp}.
\end{itemize} 

\begin{figure*}[ h!]
	\centering	
    \includegraphics[page=9,width=\textwidth]{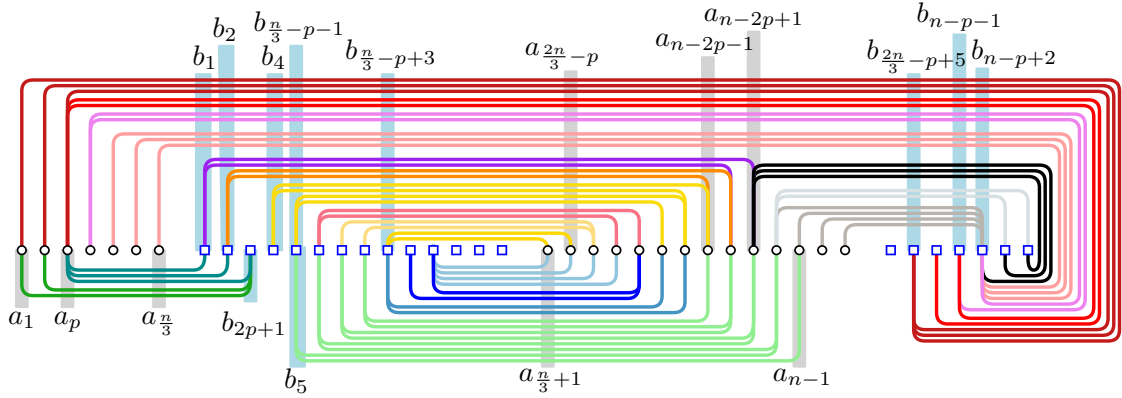}
    \caption{Page $p=3,4$ of $\mathcal{L}$.}
 \label{fig:dp}
\end{figure*}

\clearpage

\smallskip\noindent For $p =5, \dots, \frac{n}{3}-8$, page $p$ of $\mathcal{L}$ contains \textcolor{blue}{$\frac{8n}{3}+2p+18$} edges:

\begin{itemize}[-]
\setlength\itemsep{0em}
    \item $\{(a_i,b_1), i= n-2p+1,n-2p\}_{hh}$; purple in~\cref{fig:dp2},
    \item $\{(a_i,b_2), i= n-2p,n-2p-1\}_{hh}$; orange in~\cref{fig:dp2},
    \item $\{(a_i,b_j),(a_{i-1},b_j) i= n-2p-1,\dots,\frac{2n}{3}+4-p , j = 4,\dots, \frac{n}{3}-p-1\}_{hh}$; yellow in~\cref{fig:dp2},
    \item $\{(a_i,b_{\frac{n}{3}-p}), i=\frac{2n}{3}-p+3,\dots,\frac{2n}{3}-p+1\}_{hh}$; dark pink in~\cref{fig:dp2},
    \item$\{(a_{\frac{2n}{3}-p},b_j),j=\frac{n}{3}-p+1,\frac{n}{3}-p+2 \}_{hh}$; light yellow in~\cref{fig:dp2},
    \item$\{(a_i,b_{\frac{n}{3}-p+3}), i=\frac{n}{3}+1,\dots,\frac{2n}{3}-p+1\}_{hh}$; dark yellow in~\cref{fig:dp2},   
    \item $\{(a_i,b_j),(a_i,b_{j+1}), i = n-1,\dots, \frac{2n}{3}+p+2, j = 5,\dots,\frac{n}{3}+p-2\}_{tt}$; light green in~\cref{fig:dp2},
    \item$\{(a_i,b_{\frac{n}{3}+p+2}), i = \frac{n}{3}+1,\dots, \frac{2n}{3}+p \}_{tt}$; light blue in~\cref{fig:dp2},
    \item$\{(a_{\frac{2n}{3}+p},b_j), j = \frac{n}{3}+p-1,\dots,\frac{n}{3}+p+1\}_{tt}$; blue in~\cref{fig:dp2},
    \item$\{(a_{\frac{2n}{3}+p+1},b_{\frac{n}{3}+p-1})\}_{tt}$; dark blue in~\cref{fig:dp2},
    \item $\{(a_p, b_j), i = 1, \dots, 2p -1\}_{tt}$; dark green in~\cref{fig:dp2},
    \item $\{(a_{p + 1}, b_j), j = 1, \dots, p-1\}_{tt}$; green in~\cref{fig:dp2},
    \item $\{(a_i, b_{2p}), i = 1, \dots, p\}_{ht}$; dark red in~\cref{fig:dp2},       
    \item $\{(a_{p}, b_j), j = \frac{2n}{3} + 1, \dots, n - 1 - p\}_{ht}$; red in~\cref{fig:dp2},
    \item $\{(a_{p + 1}, b_j), j = n - 1 - p, \dots, n + 1 - p\}_{ht}$; pink in~\cref{fig:dp2},
    \item $\{(a_i, b_{n + 2 - p}), i = p + 1, \dots, \frac{n}{3}\}_{ht}$; light pink in~\cref{fig:dp2},
    \item $\{(a_{n + 2 - 2p}, b_j), j = n + 2 - p, \dots, n\}_{hh}$; light gray in~\cref{fig:dp2},
    \item $\{(a_i, b_{n + 2 - p}), i = n + 3 - 2p, \dots, n\}_{hh}$; gray in~\cref{fig:dp2},
    \item $\{(a_{n + 1 - 2p}, b_j), j = n + 2 - p, \dots, n\}_{ht}$; black in~\cref{fig:dp2}.
    
\end{itemize} 

\begin{figure*}[h!]
	\centering	
    \includegraphics[page=8,width=\textwidth]{figures/figarxiv/deque.pdf}
    \caption{Page $p =5, \dots, \frac{n}{3}-8$ of $\mathcal{L}$.}
 \label{fig:dp2}
\end{figure*}

\clearpage

\smallskip\noindent For $p =\frac{n}{3}-7, \dots, \frac{n}{3}-4$, page $p$ of $\mathcal{L}$ contains \textcolor{blue}{$\frac{8n}{3}-2p-1$} edges:

\begin{itemize}[-]
\setlength\itemsep{0em}
    \item$\{(a_i,b_1), i = n-2p+1,n-2p\}_{hh}$; purple in~\cref{fig:dp4},
    \item$\{(a_i,b_2), i = n-2p,n-2p-1\}_{hh}$; dark orange in~\cref{fig:dp4},
    \item$\{(a_i,b_j),(a_{i-1},b_j), i=n-2p-1,\dots, n-2p+3, j=4,\dots,6\}_{hh}$; yellow in~\cref{fig:dp4},
    \item$\{(a_i,b_{\frac{n}{3}-p}), i = \frac{2n}{3}-p+1,\dots, \frac{2n}{3}-p+3\}_{hh}$ ; dark pink in~\cref{fig:dp4},  
    \item$\{(a_{\frac{2n}{3}-p},b_j), j = \frac{n}{3}-p,\frac{n}{3}-p+1\}_{hh}$; light orange in~\cref{fig:dp4},
    \item $\{(a_i, b_{\frac{n}{3}-p+2}), i = \frac{n}{3}+1,\dots, \frac{2n}{3}-p\}_{hh}$; orange in~\cref{fig:dp4},  
    \item$\{(a_i,b_j),(a_i,b_{j+1}),i=n-1,\dots,n-5,j= 2p ,\dots, \frac{n}{3}+p-2\}_{tt}$; light green in~\cref{fig:dp4},   
    \item$\{(a_i,b_{\frac{n}{3}+p+2}), i = \frac{n}{3}+1,\dots, \frac{2n}{3}+p \}_{tt}$; light blue in~\cref{fig:dp4},
    \item$\{(a_{\frac{2n}{3}+p},b_j), j = \frac{n}{3}+p-1,\dots,\frac{n}{3}+p+1\}_{tt}$; blue in~\cref{fig:dp4},
    \item$\{(a_{\frac{2n}{3}+p+1},b_{\frac{n}{3}+p-1})\}_{tt}$; dark blue in~\cref{fig:dp4},
    \item $\{(a_p, b_j), i = 1, \dots, 2p -1\}_{tt}$; dark green in~\cref{fig:dp4},
    \item $\{(a_{p + 1}, b_j), j = 1, \dots, p-1\}_{tt}$; green in~\cref{fig:dp4},
    \item $\{(a_i, b_{2p}), i = 1, \dots, p\}_{ht}$; dark red in~\cref{fig:dp4},
    \item $\{(a_{p}, b_j), j = \frac{2n}{3} + 1, \dots, n - 1 - p\}_{ht}$; red in~\cref{fig:dp4},
    \item $\{(a_{p + 1}, b_j), j = n - 1 - p, \dots, n + 1 - p\}_{ht}$; pink in~\cref{fig:dp4},
    \item $\{(a_i, b_{n + 2 - p}), i = p + 1, \dots, \frac{n}{3}\}_{ht}$; light pink in~\cref{fig:dp4},
    \item $\{(a_{n + 2 - 2p}, b_j), j = n + 2 - p, \dots, n\}_{hh}$; light gray in~\cref{fig:dp4},
    \item $\{(a_i, b_{n + 2 - p}), i = n + 3 - 2p, \dots, n\}_{hh}$; gray in~\cref{fig:dp4},
    \item $\{(a_{n + 1 - 2p}, b_j), j = n + 2 - p, \dots, n\}_{ht}$; black in~\cref{fig:dp4}.
\end{itemize}

\begin{figure*}[ h!]
	\centering	
    \includegraphics[page=8,width=\textwidth]{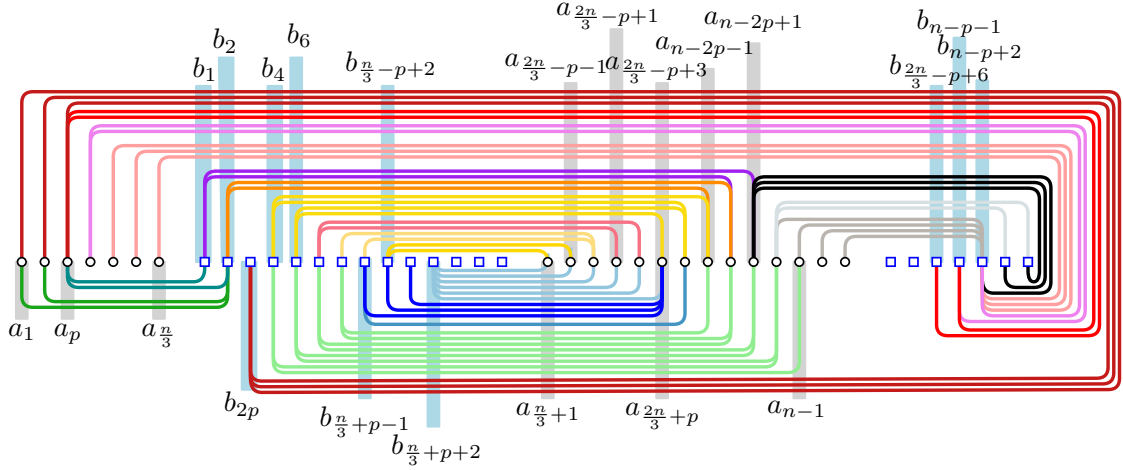}
    \caption{Page $p =\frac{n}{3}-7, \dots, \frac{n}{3}-4$ of $\mathcal{L}$.}
 \label{fig:dp4}
\end{figure*}

\clearpage

\smallskip\noindent Page $\frac{n}{3} - 3$ of $\mathcal{L}$ contains the following \textcolor{blue}{$\frac{10n}{3}+11$} edges:
 
\begin{itemize}[-]
\setlength\itemsep{0em}
    \item $\{(a_i, b_{\frac{2n}{3}-6}), i = 1, \dots, \frac{n}{3} - 3 \}_{ht}$; dark red in~\cref{fig:d4},
    \item $\{(a_{\frac{n}{3} - 3}, b_j), j=\frac{2n}{3} + 1,\frac{2n}{3} + 2\}_{ht}$; red in~\cref{fig:d4},
    \item $\{(a_{\frac{n}{3} - 2}, b_j), j=\frac{2n}{3} + 2,\dots,\frac{2n}{3} + 4\}_{ht}$; light red in~\cref{fig:d4},
    \item $\{(a_i, b_{\frac{2n}{3} + 4}), i = \frac{n}{3} - 1, \frac{n}{3}\}_{ht}$; orange in~\cref{fig:d4},
    \item $\{(a_i, b_1), i = \frac{n}{3} + 6, \frac{n}{3} + 7\}_{hh}$; dark orange in~\cref{fig:d4},
    \item $\{(a_i, b_2), i = \frac{n}{3} + 3, \dots, \frac{n}{3} + 6\}_{hh}$; light orange in~\cref{fig:d4},
    \item $\{(a_{\frac{n}{3} + 3}, b_4)\}_{hh}$; dark green in~\cref{fig:d4},
    \item $\{(a_i, b_5),i= \frac{n}{3}+1, \dots, \frac{n}{3} + 3\}_{hh}$; pink in~\cref{fig:d4},
    \item $\{(a_{\frac{n}{3} + 7}, b_j, j = \frac{2n}{3} + 4, \dots, n\}_{ht}$; yellow in~\cref{fig:d4},
    \item $\{(a_{\frac{n}{3} + 8}, b_j), j = \frac{2n}{3} + 5, \dots, n\}_{hh}$; dark blue in~\cref{fig:d4},
    \item $\{(a_i, b_{\frac{2n}{3} + 6}), i = \frac{n}{3} + 8, \dots, n\}_{hh}$; blue in~\cref{fig:d4},
    \item $\{(a_i, b_{\frac{2n}{3} - 7}), i =1, \dots, \frac{n}{3}\}_{tt}$; green in~\cref{fig:d4},
    \item $\{(a_{\frac{n}{3}}, b_j), j = 1, \dots, \frac{2n}{3}-6 \}_{tt}$; dark pink in~\cref{fig:d4},
    \item $\{(a_{n-1}, b_j), j = \frac{2n}{3} -5, \frac{2n}{3} -4\}_{tt}$; black in~\cref{fig:d4},
    \item $\{(a_i, b_{\frac{2n}{3}-4}),i=n-2,n-3\}_{tt}$; gray in~\cref{fig:d4},
    \item $\{(a_{n-3}, b_j), j = \frac{2n}{3} - 3, \frac{2n}{3} - 2\}_{tt}$; light blue in~\cref{fig:d4},
    \item $\{(a_i, b_{\frac{2n}{3}-1}),i=\frac{n}{3}+1,\dots,n-3\}_{tt}$; dark gray in~\cref{fig:d4}.
\end{itemize}

\begin{figure*}[ht]
	\centering	
    \includegraphics[page=4,width=\textwidth]{figures/figarxiv/deque.pdf}
    \caption{Page $\frac{n}{3}-3$ of $\mathcal{L}$.}
 \label{fig:d4}
\end{figure*}

\clearpage

\smallskip\noindent Page $\frac{n}{3} - 2$ of $\mathcal{L}$ contains the following \textcolor{blue}{$\frac{10n}{3}+13$} edges:
 
\begin{itemize}[-]
\setlength\itemsep{0em}
    \item $\{(a_i, b_{\frac{2n}{3}-4}), i = 1, \dots, \frac{n}{3}-2\}_{ht}$; dark red in~\cref{fig:dn2},
    \item $\{(a_{\frac{n}{3} - 2}, b_{\frac{2n}{3} + 1})\}_{ht}$; red in~\cref{fig:dn2},
    \item $\{(a_{\frac{n}{3} - 1}, b_j), j= \frac{2n}{3} + 1,\dots,\frac{2n}{3} + 3 \}_{ht}$; light red in~\cref{fig:dn2},
    \item $\{(a_{\frac{n}{3}}, b_{\frac{2n}{3} + 3}\}_{ht}$; dark gray in~\cref{fig:dn2},
    \item $\{(a_i, b_1), i = \frac{n}{3} + 2,\dots, \frac{n}{3} + 5\}_{hh}$; light pink in~\cref{fig:dn2},
    \item $ \{(a_{\frac{n}{3} + 2}, b_2), (a_{\frac{n}{3} + 2}, b_4), (a_{\frac{n}{3} + 1}, b_4)\}_{hh}$; gold in~\cref{fig:dn2},
    \item $\{(a_{\frac{n}{3} + 5}, b_j), j = \frac{2n}{3} + 3, \dots, n\}_{ht}$; gray in~\cref{fig:dn2}
    \item $\{(a_{\frac{n}{3} + 6}, b_j), j = \frac{2n}{3} + 3, \dots, n\}_{hh}$; light blue in~\cref{fig:dn2},
    \item $\{(a_{i}, b_{\frac{2n}{3}+3}), i = \frac{2n}{3} + 6, \dots, n\}_{hh}$; blue in~\cref{fig:dn2},
    \item $\{(a_{i}, b_{\frac{2n}{3}-5}), i = 1, \dots, \frac{n}{3}-2\}_{tt}$; orange in~\cref{fig:dn2},
    \item $\{(a_{\frac{n}{3}-2}, b_j), j = 1, \dots, \frac{2n}{3}-4\}_{tt}$; dark pink in~\cref{fig:dn2},
    \item $\{(a_{n - 1}, b_{\frac{2n}{3} - 3})\}_{tt}$; black in~\cref{fig:dn2},
    \item $\{(a_{n - 2}, b_j), j = \frac{2n}{3} - 3, \dots, \frac{2n}{3}\}_{tt}$; pink in~\cref{fig:dn2},
    \item $\{(a_i, b_{\frac{2n}{3}}), i = \frac{n}{3} + 1, \dots, n - 3\}_{tt}$; dark blue in~\cref{fig:dn2}.
\end{itemize}

\begin{figure*}[ h!]
	\centering	
    \includegraphics[page=3,width=\textwidth]{figures/figarxiv/deque.pdf}
    \caption{Page $\frac{n}{3}-2$ of $\mathcal{L}$.}
 \label{fig:dn2}
\end{figure*}

\clearpage

\smallskip\noindent Page $\frac{n}{3} - 1$ of $\mathcal{L}$ contains the following \textcolor{blue} {$\frac{11n}{3}-8$} edges:   
 
\begin{itemize}[-]
\setlength\itemsep{0em}
    \item $\{(a_i, b_{\frac{2n}{3}-3}), i = 1, \dots, \frac{n}{3}-1\}_{hh}$; dark red in~\cref{fig:dn1},
    \item $\{(a_{\frac{n}{3}-1}, b_j), j = 1, \dots, \frac{2n}{3}-4\}_{hh}$; light red in~\cref{fig:dn1},
    \item $\{(a_i, b_{\frac{2n}{3}-2}), i = 1, \dots, \frac{n}{3}-1\}_{th}$; red in~\cref{fig:dn1},
    \item $\{(a_{n - 1}, b_j), j = \frac{2n}{3} - 2, \dots, \frac{2n}{3}\}_{ht}$; light orange in~\cref{fig:dn1},
    \item $\{(a_{\frac{n}{3} + 2}, b_{\frac{2n}{3} + 1})\}_{ht}$; dark blue in~\cref{fig:dn1},
    \item $\{(a_{\frac{n}{3} + 3}, b_j), j = \frac{2n}{3} + 1, \dots, n\}_{ht}$; gray in~\cref{fig:dn1},
    \item $\{(a_{\frac{n}{3} + 4}, b_j), j = \frac{2n}{3} + 3, \dots, n\}_{hh}$; light blue in~\cref{fig:dn1},
    \item $\{(a_{\frac{n}{3} + 5}, b_{\frac{2n}{3} + 2}), (a_{\frac{n}{3} + 5}, b_{\frac{2n}{3} + 3})\}_{hh}$; green in~\cref{fig:dn1},
    \item $\{(a_i, b_{\frac{2n}{3} + 2}), i = \frac{n}{3} + 6, \dots, n\}_{hh}$; blue in~\cref{fig:dn1},
    \item $\{(a_{n - 1}, b_1), (a_{n - 2}, b_1), (a_{n - 2}, b_2), (a_{n - 3}, b_2)\}_{tt}$; orange in~\cref{fig:dn1},
    \item $\{(a_i, b_j), (a_{i - 1}, b_j),  i = n - 3, \dots, \frac{2n}{3} +2, j = 4, \dots, \frac{n}{3}-2\}_{tt}$; yellow in~\cref{fig:dn1},
    \item $\{(a_i,b_{\frac{n}{3}-1},i = \frac{2n}{3}+2,\dots,\frac{2n}{3}\}_{tt}$; pink in~\cref{fig:dn1},
    \item $\{(a_{\frac{2n}{3}}, b_j), j = \frac{n}{3}, \frac{n}{3} + 1\}_{tt}$; black in~\cref{fig:dn1},
    \item $\{(a_i, b_{\frac{n}{3} + 2}), i = \frac{n}{3}+1, \dots, \frac{2n}{3}\}_{tt}$; dark pink in~\cref{fig:dn1},
\end{itemize}
 
\begin{figure*}[ h!]
	\centering	
    \includegraphics[page=2,width=\textwidth]{figures/figarxiv/deque.pdf}
    \caption{Page $\frac{n}{3}-1$ of $\mathcal{L}$.}
 \label{fig:dn1}
\end{figure*}

\clearpage

\smallskip\noindent  Page $\frac{n}{3}$ of $\mathcal{L}$ contains the following \textcolor{blue} {$4n+14$} edges:
 
\begin{itemize}[-]
\setlength\itemsep{0em}
   \item $\{(a_{i}, b_{\frac{2n}{3}}), i = 1, \dots, \frac{n}{3} \}_{ht}$; dark red in~\cref{fig:dn}, 
   \item $\{(a_{\frac{n}{3} + 1}, b_j), j = 1, 2, 3\}_{ht}$; red in~\cref{fig:dn}, 
    \item $\{(a_i, b_3), i = \frac{n}{3} + 1, \dots, n\}_{ht}$; dark orange in~\cref{fig:dn},
    \item $\{(a_n, b_j), j = 4, \dots, \frac{2n}{3}\}_{ht}$; orange in~\cref{fig:dn},
    \item $\{(a_{\frac{n}{3} + 2}, b_j), j = \frac{2n}{3} + 2, \dots, n\}_{hh}$; gray in~\cref{fig:dn},
    \item $\{(a_{\frac{n}{3} + 4}, b_{\frac{2n}{3} + 2})\}_{hh}$; yellow in~\cref{fig:dn},
    \item $\{(a_i, b_{\frac{2n}{3} + 1}), i = \frac{n}{3} + 4, \dots, n\}_{hh}$; light blue in~\cref{fig:dn},
    \item $\{(a_{\frac{n}{3} + 1}, b_j), j = \frac{2n}{3} + 1, \dots, n\}_{ht}$; light orange in~\cref{fig:dn},
    \item $\{(a_i, b_{\frac{2n}{3} - 1}), i = 1, \dots, \frac{n}{3}\}_{ht}$; dark blue in~\cref{fig:dn},
    \item $\{(a_{\frac{n}{3}}, b_j), j = 1, \dots,\frac{2n}{3} - 1\}_{ht}$; blue in~\cref{fig:dn}.
\end{itemize}
 
\begin{figure*}[ h!]
	\centering	
    \includegraphics[page=1,width=\textwidth]{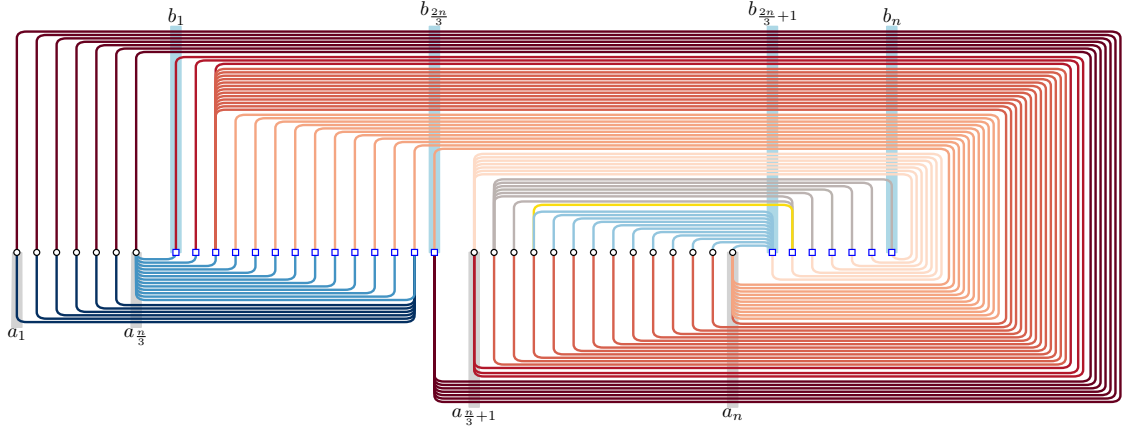}
    \caption{Page $\frac{n}{3}$ of $\mathcal{L}$.}
 \label{fig:dn}
\end{figure*}

\noindent So, in total $\mathcal{L}$ has $2n+16+\frac{8n}{3}+5+\sum_{p=3}^{4}(\frac{8n}{3}+p+5)+\sum_{p=\frac{n}{3}-7}^{\frac{n}{3}-4}(\frac{8n}{3}-2p-1)+\sum_{p=5}^{\frac{n}{3}-8}(\frac{8n}{3}+2p+18)+\frac{10n}{3}+11+\frac{10n}{3}+13+\frac{11n}{3}-8+4n+14 = n^2$ edges. Since no two edges have been assigned to the same deque and all edges in the same deque form a cylindric layout, it follows that the deque number of $K_{n,n}$ is at most $\frac{n}{3}$.
\end{proof}

\begin{figure*}[p]
    \centering	    
    \includegraphics[width=\textwidth,page=2]{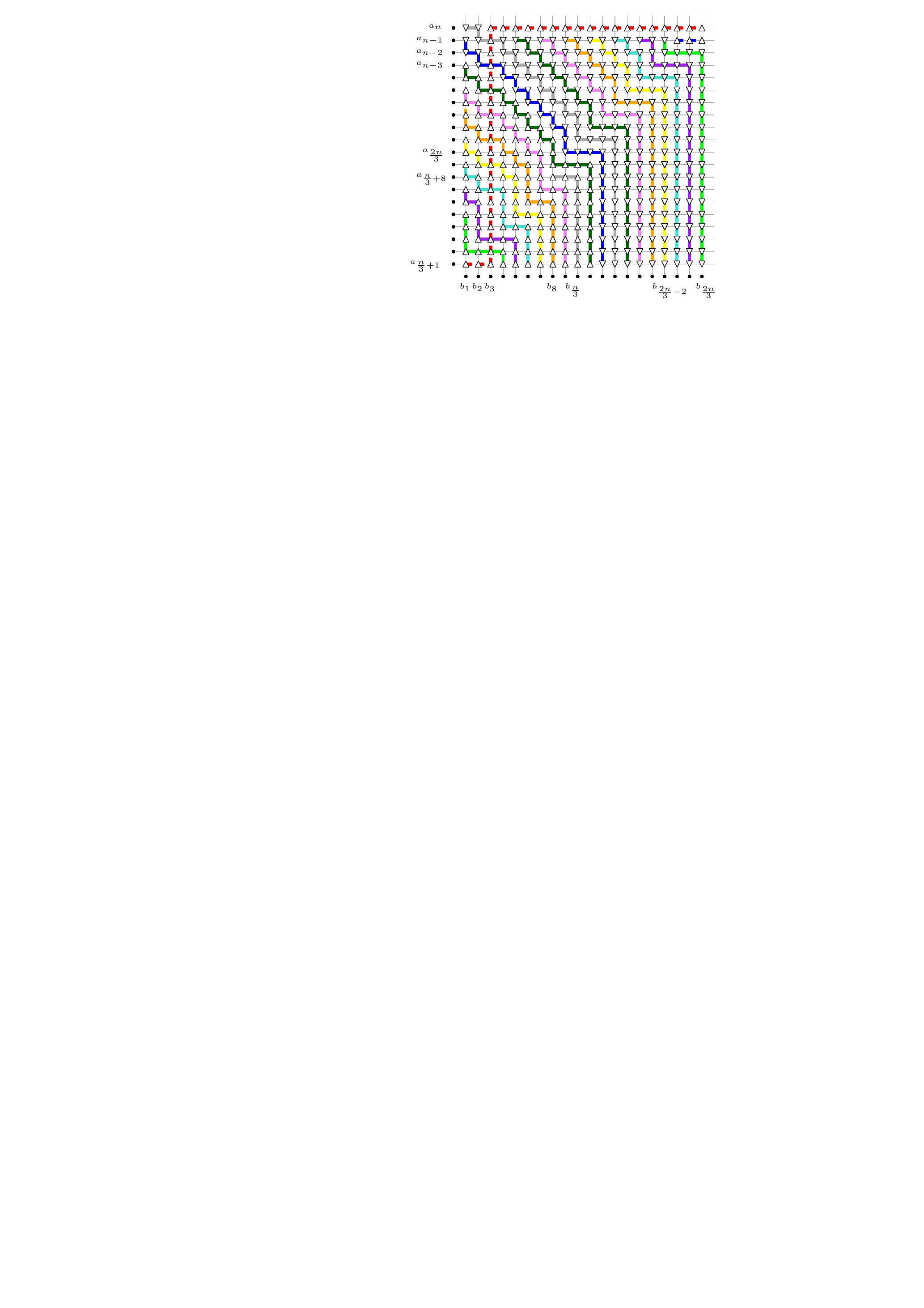}
    \caption{Illustration of the grid representation of a deque layout of $K_{n,n}$, in which paths of the same color correspond to the same deque. The points of the grid that are represented by a triangle pointing up (down) correspond to  head-$\star$ (tail-$\star$, resp.) edges; grid points covered by a solid line are head-head or tail-tail edges; the dashed covered ones are head-tail or tail-head}
    \label{fig:bipartite-matrix-deque}
\end{figure*}

\clearpage

\noindent We complete this section by providing our upper bound on the rique-number of $K_{n,n}$. 

\newcommand{\riquebipartite}{The rique-number of the complete bipartite graph $K_{n,n}$ is at most $\lfloor\frac{n-1}{2}\rfloor-1$.}
\begin{theorem}\label{thm:rique-bipartite}
\riquebipartite
\end{theorem}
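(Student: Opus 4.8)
The plan is to mimic the structure of the rique construction for $K_n$ in \cref{thm:rique-complete}, but adapted to the bipartite setting with the vertex order chosen analogously to the deque construction of \cref{thm:deque-bipartite}. Recall that a rique is a deque that uses only head-head and tail-head edges; equivalently, in the grid representation each rique decomposes into a monotonically decreasing curve (the head-head edges) and a monotonically increasing path (the head-tail edges). So I would first fix an interleaved vertex order of $A$ and $B$ of the same flavour used for the deque bound, and then reason about the $n\times n$ ``bipartite grid'' $H$ whose point $(i,j)$ represents the edge $(a_i,b_j)$. Unlike the complete-graph case, every grid point is a potential edge (there is no restriction $i<j$), so the combinatorial budget per page is governed by how long an increasing path plus a decreasing staircase can be made inside $H$ without violating the forbidden rique pattern $a\prec b\prec c\prec b'\prec\{a',c'\}$.

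The key steps, in order, are as follows. First I would establish the analog of \cref{lem:rique-density} for bipartite graphs: a bound on the maximum number of edges of $K_{n,n}$ that a single rique can carry, obtained by bounding the total length of one increasing path and one decreasing path in $H$ that are compatible with a cylindric layout. Second, I would design one generic ``uniform'' page template, parameterized by $p$, that covers a band of the grid using a head-tail increasing path together with a head-head decreasing staircase, exactly as the colored paths in the deque figures do, but now restricted to the two allowed edge types. Third, I would handle a constant number of boundary pages by hand (the first few and the last few indices), as is done in both earlier proofs, to absorb the $O(1)$ discrepancies at the corners of $H$. Finally, I would verify the counting identity, summing the per-page edge counts to reach exactly $n^2$ and confirming that the number of pages used is $\lfloor\frac{n-1}{2}\rfloor-1$.

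The main obstacle I expect is twofold. Because a rique forbids tail-tail and tail-head edges, the decreasing staircase cannot be placed below the spine, so each page can host essentially one increasing and one decreasing chain rather than the richer four-type palette available to a deque; this roughly halves the per-page capacity and explains why the bipartite rique bound $\lfloor\frac{n-1}{2}\rfloor-1$ is weaker than the deque bound $\lceil\frac{n}{3}\rceil$. Making the increasing and decreasing parts coexist without creating the forbidden three-edge pattern — i.e.\ ensuring the head-tail path's right endpoints do not interleave badly with the head-head staircase — is the delicate constraint that must be checked for the generic page and, more carefully, at the seams between consecutive pages. I would therefore spend most of the effort on a clean invariant describing which columns are ``already saturated'' after page $p$, so that the decreasing and increasing chains of page $p+1$ fit snugly against the previous boundary, and I would rely on the grid-path characterization recalled in \cref{sec:preliminaries} to certify cylindricity rather than re-checking the forbidden pattern edge by edge.
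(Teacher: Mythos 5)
Your high-level strategy coincides with the paper's: fix an order mixing $A$ and $B$, work in the $n\times n$ grid where $(i,j)$ represents $(a_i,b_j)$, build each rique from one head-head (decreasing) chain plus one head-tail (increasing) chain, use a uniform page template for most pages and a constant number of hand-crafted boundary pages, and close by summing the per-page counts to $n^2$. However, as written the proposal has a genuine gap: it never produces the construction. The entire substance of the paper's proof of \cref{thm:rique-bipartite} is the explicit edge-to-rique assignment --- roughly a dozen ``special'' pages plus the generic page, each specified as a list of index ranges with a declared type ($hh$ or $ht$), together with the verification that each page is cylindric and that the counts sum to $n^2$. Your four ``key steps'' describe the shape of that object but defer every index, every seam between consecutive pages, and the cylindricity check to future work; the ``clean invariant describing which columns are already saturated'' is precisely the part that is hard and that you have not supplied. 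A reader cannot extract the bound $\lfloor\frac{n-1}{2}\rfloor-1$ from what is written.

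Two further points of divergence worth noting. First, the vertex order you propose (``of the same flavour used for the deque bound'', i.e.\ the block order $a_1\prec\dots\prec a_{n/3}\prec b_1\prec\dots\prec b_{2n/3}\prec\dots$) is not the one the paper uses for riques; the paper interleaves the first halves element by element, $a_1\prec b_1\prec a_2\prec b_2\prec\dots\prec a_{\lfloor n/2\rfloor}\prec b_{\lfloor n/2\rfloor}\prec b_{\lceil n/2\rceil}\prec\dots\prec b_n\prec a_{\lceil n/2\rceil}\prec\dots\prec a_n$, and it is not at all clear that the deque-style block order admits a rique layout with $\lfloor\frac{n-1}{2}\rfloor-1$ pages, so this choice would itself need justification. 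Second, your proposed ``bipartite analog of \cref{lem:rique-density}'' is a lower-bound tool and is neither used nor needed for this upper-bound statement; the paper proves \cref{thm:rique-bipartite} purely constructively. Also, a small slip: a rique uses head-head and head-tail edges (it forbids tail-tail and tail-head), not ``head-head and tail-head'' as stated in your first paragraph; you have it right later on.
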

    
\begin{proof}

For the proof, we distinguish two cases depending on whether $n$ is odd or even.

\medskip\noindent\textbf{Case 1:} $n$ is odd. We first assume that $n$ is odd and we prove that $K_{n,n}$ admits a rique layout $\mathcal{L}$ with $\lfloor \frac{n}{2} \rfloor - 1$ riques. Let $A=\{a_1,\dots,a_n\}$ and $B=\{b_1,\dots,b_n\}$ be the two parts of $K_{n,n}$, such that $a_1 \prec b_1 \prec a_2 \prec b_2 \prec \dots \prec a_{\lfloor \frac{n}{2} \rfloor} \prec b_{\lfloor \frac{n}{2} \rfloor} \prec b_{\lceil \frac{n}{2} \rceil} \prec \dots \prec b_n \prec a_{\lceil \frac{n}{2} \rceil} \prec \dots \prec a_n$ holds in $\mathcal{L}$. In $\mathcal{L}$, there exist 12 special riques, in particular, the ones in $\{1, 2, 3, 4, 5,6,7, \lfloor \frac{n}{2} \rfloor-5, \lfloor \frac{n}{2} \rfloor-4, \lfloor \frac{n}{2} \rfloor-3, \lfloor \frac{n}{2} \rfloor-2, \lfloor \frac{n}{2} \rfloor-1\}$; see Fig.~\ref{fig:k29,29completebipartite}.   

\smallskip\noindent Page~1 of $\mathcal{L}$ contains the following \textcolor{blue}{3n} edges:
\begin{itemize}[-]
\setlength\itemsep{0em}
\item $\{(a_1,b_j), j=1,\dots, n\}_{ht}$; dark red in~\cref{fig:o1}, 
\item $\{(a_i,b_1), i = \lceil \frac{n}{2} \rceil ,\dots,n\}_{ht}$; red in~\cref{fig:o1},
\item $\{(a_{\lfloor \frac{n}{2} \rfloor},b_j), j = 2 ,\dots, \lfloor \frac{n}{2} \rfloor\}_{hh}$; gray in~\cref{fig:o1},
\item $\{(a_i,b_2), i = \lceil \frac{n}{2} \rceil ,\dots,n\}_{hh}$; blue in~\cref{fig:o1},
\item $\{(a_{\lceil \frac{n}{2} \rceil},b_j), j =  \lceil \frac{n}{2} \rceil ,\dots, n \}_{hh}$; light blue in~\cref{fig:o1}.
\end{itemize}
 
\begin{figure*}[h!]
	\centering
	\includegraphics[page=1]{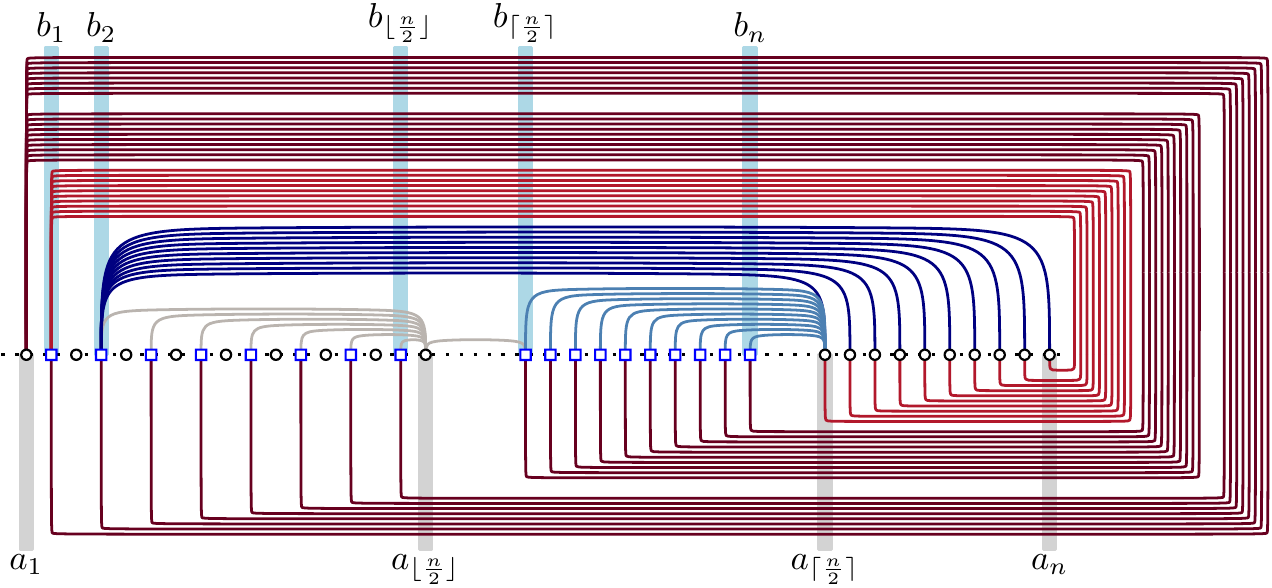}
    \caption{Page $1$ of $\mathcal{L}$ when $n$ is odd.}
 \label{fig:o1}
\end{figure*}
 
\smallskip\noindent Page~2 of $\mathcal{L}$ contains the following \textcolor{blue}{$\frac{5n-1}{2} + 1 $} edges:
\begin{itemize}[-]
\setlength\itemsep{0em}
\item $\{(a_i,b_1), i = 2 ,\dots,\lfloor \frac{n}{2} \rfloor\}_{ht}$; dark red in~\cref{fig:o2},
\item $\{(a_2,b_{\lfloor \frac{n}{2} \rfloor}) \}_{ht}$; yellow in~\cref{fig:o2}, 
\item $\{(a_2,b_j), j= \lceil \frac{n}{2} \rceil ,\dots, n\}_{ht}$; light red in~\cref{fig:o2}, 
\item $\{(a_3,b_n) \}_{ht}$; red in~\cref{fig:o2}, 
\item $\{(a_i,b_3), i = \lceil \frac{n}{2} \rceil ,\dots,n\}_{ht}$; dark blue in~\cref{fig:o2},
\item $\{(a_{n-3},b_j), j = \lfloor \frac{n}{2} \rfloor - 2, \dots, \lceil \frac{n}{2} \rceil + 1\}_{hh}$; light blue in~\cref{fig:o2},
\item $\{(a_i,b_{\lceil \frac{n}{2} \rceil + 1}), i = \lceil \frac{n}{2} \rceil + 1 , \dots, n-4\}_{hh}$; blue in~\cref{fig:o2},
\item $\{(a_{\lceil \frac{n}{2} \rceil + 1 },b_j), j = \lceil \frac{n}{2} \rceil + 2 ,\dots,n\}_{hh}$; gray in~\cref{fig:o2}.
\end{itemize}
 
\begin{figure*}[h!]
	\centering
	\includegraphics[page=2]{figures/figarxiv/oddbipartite.pdf}
    \caption{Page $2$ of $\mathcal{L}$ when $n$ is odd.}
 \label{fig:o2}
\end{figure*}
 
\smallskip\noindent For $p=3,4,5$, page $p$ of $\mathcal{L}$ contains the following \textcolor{blue}{$\left(\frac{5n-1}{2}\right) + 1 $} edges
\begin{itemize}[-]
\setlength\itemsep{0em}
\item $\{(a_{p-1},b_j), j= 2,\dots, \lfloor \frac{n}{2} \rfloor-p+2\}_{ht}$; dark red in~\cref{fig:o345},
\item $\{(a_p,b_j), j=\lfloor \frac{n}{2} \rfloor-p+2,\dots, n-p+2\}_{ht}$; red in~\cref{fig:o345}, 
\item $\{(a_{p+1},b_j), j= n-p+2, \dots, n\}_{ht}$; light red in~\cref{fig:o345},
\item $\{(a_i,b_{p+1}), i=\lceil \frac{n}{2} \rceil,\dots, n\}_{ht}$; dark blue in~\cref{fig:o345},
\item $\{(a_{n+(p-5)},b_j), j = \lfloor \frac{n}{2} \rfloor - 2, \dots, \lceil \frac{n}{2} \rceil + (p-1)\}_{hh}$; gray in~\cref{fig:o345},
\item $\{(a_i,b_{\lceil \frac{n}{2} \rceil + (p-1)}), i = \lceil \frac{n}{2} \rceil + (p-1) , \dots, n+(p-6)\}_{hh}$; blue in~\cref{fig:o345},
\item $\{(a_{\lceil \frac{n}{2} \rceil + (p-1) },b_j), j = \lceil \frac{n}{2} \rceil + p ,\dots,n\}_{hh}$; light blue in~\cref{fig:o345}.
\end{itemize}
 
\begin{figure*}[h!]
	\centering
	\includegraphics[page=3]{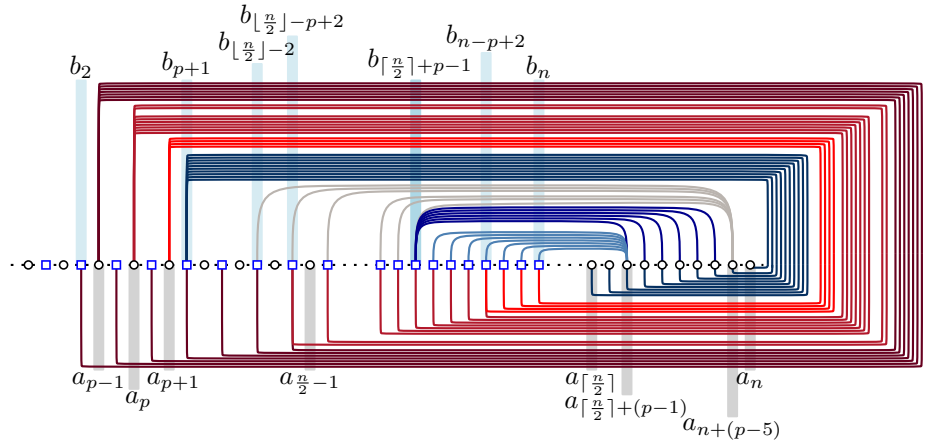}
    \caption{Page $p=3,4,5$ of $\mathcal{L}$ when $n$ is odd.}
 \label{fig:o345}
\end{figure*}

\clearpage

\smallskip\noindent Page 6 of $\mathcal{L}$ contains the following \textcolor{blue}{$\frac{5n-1}{2}-8$} edges:
\begin{itemize}[-]
\setlength\itemsep{0em}
\item $\{(a_5,b_j), j= 2,\dots, \lfloor \frac{n}{2} \rfloor-4\}_{ht}$; dark red in~\cref{fig:o6},
\item $\{(a_6,b_j), j=\lfloor \frac{n}{2} \rfloor-4,\dots, n-4\}_{ht}$; red in~\cref{fig:o6}, 
\item $\{(a_7,b_j), j= n-4, \dots, n\}_{ht}$; light red in~\cref{fig:o6},
\item $\{(a_i,b_7), i=\lceil \frac{n}{2} \rceil,\dots, n\}_{ht}$; dark blue in~\cref{fig:o6},
\item $\{(a_i,b_{\lceil \frac{n}{2} \rceil + 5}), i = \lceil \frac{n}{2} \rceil +5 , \dots,n)\}_{hh}$; blue in~\cref{fig:o6},
\item $\{(a_{\lceil \frac{n}{2} \rceil + 5 },b_j), j = \lceil \frac{n}{2} \rceil + 6,\dots,n\}_{hh}$; light red in~\cref{fig:o6}.
\end{itemize}
 
\begin{figure*}[h!]
	\centering
	\includegraphics[page=4]{figures/figarxiv/oddbipartite.pdf}
    \caption{Page $6$ of $\mathcal{L}$ when $n$ is odd.}
 \label{fig:o6}
\end{figure*}

\begin{figure*}[b]
	\centering
	\includegraphics[page=5]{figures/figarxiv/oddbipartite.pdf}
    \caption{Page $7$ of $\mathcal{L}$ when $n$ is odd.}
 \label{fig:o7}
\end{figure*}
 
\smallskip\noindent Page 7 of $\mathcal{L}$ contains the following \textcolor{blue}{$\frac{3n+1}{2} +4$} edges:
\begin{itemize}[-]
\setlength\itemsep{0em}
\item $\{(a_6,b_j), j= 2,\dots, \lfloor \frac{n}{2} \rfloor-5\}_{ht}$; dark red in~\cref{fig:o7},
\item $\{(a_7,b_j), j=\lfloor \frac{n}{2} \rfloor-5,\dots, n-5\}_{ht}$; red in~\cref{fig:o7}, 
\item $\{(a_8,b_j), j= n- 5, \dots, n\}_{ht}$; light red in~\cref{fig:o7},
\item $\{(a_i,b_8), i=\lceil \frac{n}{2} \rceil,\dots, n\}_{ht}$; blue in~\cref{fig:o7},
\item $\{(a_{\lfloor \frac{n}{2} \rfloor},b_j), j = \lceil \frac{n}{2} \rceil , \dots,  \lceil \frac{n}{2} \rceil+2)\}_{hh}$; light blue in~\cref{fig:o7}.
\end{itemize}

\smallskip\noindent For $p =8, \dots, \frac{n-1}{2}-6 $, page $p$ of $\mathcal{L}$ contains the following \textcolor{blue}{$\frac{5n+3}{2} -2p + 4$} edges:
\begin{itemize}[-]
\setlength\itemsep{0em}
\item $\{(a_{p-1},b_j), j= 2,\dots, \lfloor \frac{n}{2} \rfloor-p+2\}_{ht}$; dark red in~\cref{fig:op},
\item $\{(a_p,b_j), j=\lfloor \frac{n}{2} \rfloor-p+2,\dots, n-p+2\}_{ht}$; red in~\cref{fig:op}, 
\item $\{(a_{p+1},b_j), j=n-p+2, \dots, n\}_{ht}$; light red in~\cref{fig:op},
\item $\{(a_i,b_{p+1}), i=\lceil \frac{n}{2} \rceil,\dots, n\}_{ht}$; dark blue in~\cref{fig:op},
\item $\{(a_i,b_{\lceil \frac{n}{2} \rceil+p-2}), i = \lceil \frac{n}{2} \rceil +p-2, \dots,n)\}_{hh}$; blue in~\cref{fig:op},
\item $\{(a_{\lceil \frac{n}{2} \rceil+p-2},b_j), j = \lceil \frac{n}{2} \rceil + p-1 ,\dots,n\}_{hh}$; light blue in~\cref{fig:op}.
\end{itemize}
 
\begin{figure*}[h!]
	\centering
	\includegraphics[page=6]{figures/figarxiv/oddbipartite.pdf}
    \caption{Page $p =8, \dots, \frac{n-1}{2}-6 $ of $\mathcal{L}$ when $n$ is odd.}
 \label{fig:op}
\end{figure*}

\clearpage

\smallskip\noindent Page  $\frac{n-1}{2} - 5$ of $\mathcal{L}$ contains the following \textcolor{blue}{2n-3} edges:
\begin{itemize}[-]
\setlength\itemsep{0em}
\item $\{(a_{\lfloor \frac{n}{2} \rfloor-6},b_j), j= 2,\dots, \lfloor \frac{n}{2} \rfloor -7\}_{ht}$; dark red in~\cref{fig:o},
\item $\{(a_{(\lfloor \frac{n}{2} \rfloor - 5)},b_j), j=\lfloor \frac{n}{2} \rfloor -7,\dots, n-7\}_{ht}$; red in~\cref{fig:o}, 
\item $\{(a_{\lfloor \frac{n}{2} \rfloor - 4},b_j), j= n-7, \dots, n\}_{ht}$; light red in~\cref{fig:o},
\item $\{(a_i,b_{\lfloor \frac{n}{2} \rfloor -4}), i=\lceil \frac{n}{2} \rceil,\dots, n\}_{ht}$; light blue in~\cref{fig:o},
\item $\{(a_i,b_{\lfloor \frac{n}{2} \rfloor - 3}), i = \lceil \frac{n}{2} \rceil + 5 , \dots,n)\}_{hh}$; blue in~\cref{fig:o}.
\end{itemize}

\begin{figure*}[h!]
	\centering
	\includegraphics[page=7]{figures/figarxiv/oddbipartite.pdf}
    \caption{Page $\frac{n-1}{2} - 5$ of $\mathcal{L}$ when $n$ is odd.}
 \label{fig:o}
\end{figure*}

\smallskip\noindent For $p=\frac{n-1}{2} - k,$ with $ k \in \{4,3,2\}$, page $p$ of $\mathcal{L}$ contains the following \textcolor{blue}{$\frac{3n-1}{2}-2k+20$} edges:
 
\begin{itemize}[-]
\setlength\itemsep{0em}
\item $\{(a_{p-1},b_j), j= 2,\dots, \lfloor \frac{n}{2} \rfloor-p-4\}_{ht}$; dark red in~\cref{fig:ok},
\item $\{(a_p,b_j), j=\lfloor \frac{n}{2} \rfloor-p-4,\dots, n-p-4\}_{ht}$; red in~\cref{fig:ok}, 
\item $\{(a_{p+1},b_j), j= n-p-4, \dots, n\}_{ht}$; light red in~\cref{fig:ok},
\item $\{(a_i,b_{p+1}), i=\lceil \frac{n}{2} \rceil,\dots, \lceil \frac{n}{2} \rceil + k \}_{ht}$; dark pink in~\cref{fig:ok},
\item $\{(a_i,b_{p+2}), i= \lceil \frac{n}{2} \rceil + k,\dots, n-8+k \}_{ht}$; light pink in~\cref{fig:ok},
\item $\{(a_{n-8+k},b_j), j = \lfloor \frac{n}{2} \rfloor - 2+k, \dots, \lceil \frac{n}{2} \rceil\}_{ht}$; pink in~\cref{fig:ok},
\item $\{(a_i,b_{n+(2k-9)}), i = n+(k-8), \dots, n \}_{ht}$; dark blue in~\cref{fig:ok},
\item $\{(a_i,b_{n+(2k-8)}), i = n+(k-8) , \dots,n)\}_{hh}$; blue in~\cref{fig:ok},
\item $\{(a_{n+(k-8)},b_j), j = n+(2k-7) , \dots,n)\}_{hh}$; gray in~\cref{fig:ok}.
\end{itemize}
 
\begin{figure*}[h!]
	\centering
	\includegraphics[page=8]{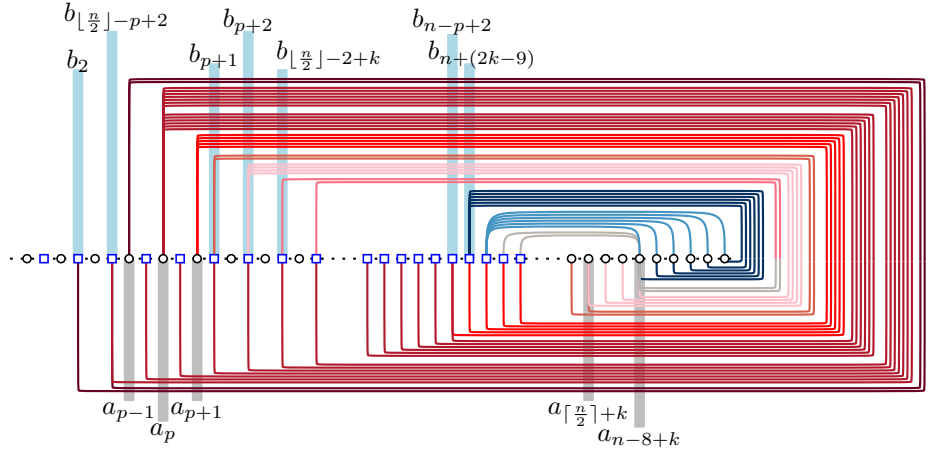}
    \caption{Page $p=\frac{n-1}{2} - k,$ with $ k \in \{4,3,2\}$ of $\mathcal{L}$ when $n$ is odd.}
 \label{fig:ok}
\end{figure*}
 
\smallskip\noindent Page $\frac{n-1}{2} - 1$ of $\mathcal{L}$ contains the following \textcolor{blue}{$\frac{3n+1}{2}+18$} edges:
 
\begin{itemize}[-]
\setlength\itemsep{0em}
\item $\{(a_{\lfloor \frac{n}{2} \rfloor -1},b_j), j= 2,\dots, \lceil \frac{n}{2} \rceil+3\}_{ht}$; dark red in~\cref{fig:o9},
\item $\{(a_{\lfloor \frac{n}{2} \rfloor},b_j), j=\lceil \frac{n}{2} \rceil+3,\dots, n\}_{ht}$; red in~\cref{fig:o9}, 
\item $\{(a_i,b_{\lfloor \frac{n}{2} \rfloor}), i=\lceil \frac{n}{2} \rceil,\lceil \frac{n}{2} \rceil+1\}_{ht}$; light red in~\cref{fig:o9},
\item $\{(a_i,b_{\lceil \frac{n}{2} \rceil}), i=\lceil \frac{n}{2} \rceil + 1, \dots,n-7\}_{ht}$; dark pink in~\cref{fig:o9},
\item $\{(a_i,b_{n-7}), i=\lceil \frac{n}{4} \rceil + 1, \dots, n\}_{ht}$; pink in~\cref{fig:o9},
\item $\{(a_i,b_{n-6}), i=\lfloor \frac{n}{4} \rfloor + 1 , \dots, n\}_{hh}$; dark blue in~\cref{fig:o9},
\item $\{(a_{\lceil \frac{n}{4} \rceil + 1},b_j), j=n-5, \dots, n\}_{hh}$; blue in~\cref{fig:o9},
\item $\{(a_{\lfloor \frac{n}{2} \rfloor - 2},b_j), j=2,3\}_{hh}$; gray in~\cref{fig:o9}.
\end{itemize}
 
\begin{figure*}[h!]
	\centering
	\includegraphics[page=9]{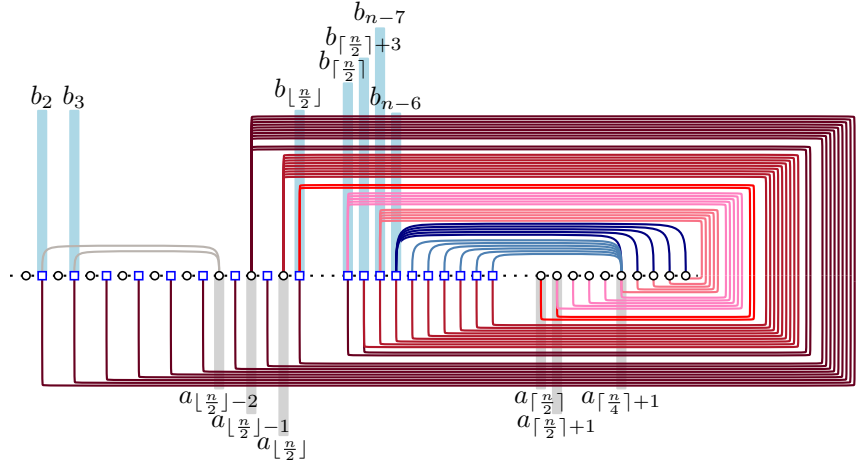}
    \caption{Page $\frac{n-1}{2} - 1$ of $\mathcal{L}$ when $n$ is odd.}
 \label{fig:o9}
\end{figure*}
 
\noindent So, in total $\mathcal{L}$ has $3n+\frac{5n-1}{2} + 1+3\left(\frac{5n-1}{2} + 1\right)+\frac{5n-1}{2}-8+\frac{3n+1}{2}+4+\sum_{p=8}^{\frac{n-1}{2} -6}\left(\frac{5n+3}{2}-2p+4\right)+2n-3+\sum_{k=2}^{4}(\frac{3n-1}{2}-2k+20)+(\frac{3n+1}{2}+18)$ = $n^2$ edges. Since no two edges in the same rique deviate from the properties of cylindric layouts, it follows that the rique number of $k_{n,n}$ is at most $\lfloor \frac{n-1}{2} \rfloor -1 $ when n is odd. 

\medskip\noindent\textbf{Case 2:} $n$ is even. When $n$ is even, we prove that $K_{n,n}$ admits a rique layout $\mathcal{L}$ with $\frac{n}{2}-2$ riques.
Let $A=\{a_1,\dots,a_n\}$ and $B=\{b_1,\dots,b_n\}$ be the two parts of $K_{n,n}$, such that $a_1 \prec b_1 \prec a_2 \prec b_2 \prec \dots \prec a_{\frac{n}{2}-1} \prec b_{\frac{n}{2}-1} \prec b_{\frac{n}{2} } \prec \dots \prec b_n \prec a_{\frac{n}{2}} \prec \dots \prec a_n$ holds in $\mathcal{L}$. In $\mathcal{L}$, there exist 12 special riques, in particular, the ones in $\{1, 2, 3, 4, 5,6,7, \frac{n}{2}-6, \frac{n}{2}-5, \frac{n}{2}-4, \frac{n}{2}-3, \frac{n}{2}-2\}$; see Fig.~\ref{fig:k30,30completebipartite}.  

\smallskip\noindent Page~1 of $\mathcal{L}$ contains the following \textcolor{blue}{$3n+1$} edges:
\begin{itemize}[-]
\setlength\itemsep{0em}
\item $\{(a_1,b_j), j=1,\dots, n\}_{ht}$; dark red in~\cref{fig:e1}, 
\item $\{(a_i,b_1), i = \frac{n}{2},\dots,n\}_{ht}$; red in~\cref{fig:e1},
\item $\{(a_{ \frac{n}{2}-1},b_j), j = 2 ,\dots, \frac{n}{2}-1\}_{hh}$; light blue in~\cref{fig:e1},
\item $\{(a_i,b_2), i = \frac{n}{2} ,\dots,n\}_{hh}$; light red in~\cref{fig:e1},
\item $\{(a_{\frac{n}{2}},b_j), j =   \frac{n}{2} ,\dots, n \}_{hh}$; blue in~\cref{fig:e1}.
\end{itemize}
 
\begin{figure*}[h!]
	\centering
	\includegraphics[page=1]{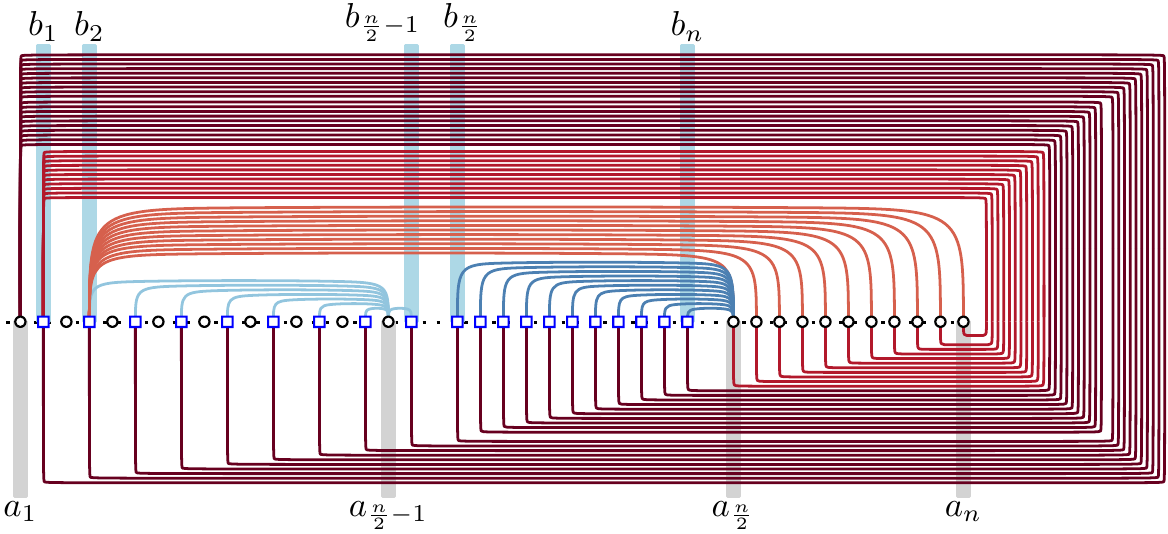}
    \caption{Page $1$ of $\mathcal{L}$ when $n$ is even.}
 \label{fig:e1}
\end{figure*}

\begin{figure*}[b]
	\centering
	\includegraphics[page=2]{figures/figarxiv/evenbipartite.pdf}
    \caption{Page $2$ of $\mathcal{L}$ when $n$ is even.}
 \label{fig:e2}
\end{figure*}

\smallskip\noindent Page~2 of $\mathcal{L}$ contains the following \textcolor{blue}{$\frac{5n}{2}+2 $} edges:
\begin{itemize}[-]
\setlength\itemsep{0em}
\item $\{(a_i,b_1), i = 2 ,\dots,\frac{n}{2} -1\}_{ht}$; dark red in~\cref{fig:e2},
\item $\{(a_2,b_{\frac{n}{2} -1}) \}_{ht}$; orange in~\cref{fig:e2}, 
\item $\{(a_2,b_j), j=  \frac{n}{2}  ,\dots, n\}_{ht}$; red in~\cref{fig:e2}, 
\item $\{(a_3,b_n) \}_{ht}$; light red in~\cref{fig:e2}, 
\item $\{(a_i,b_3), i =  \frac{n}{2},\dots,n\}_{ht}$;light blue in~\cref{fig:e2},
\item $\{(a_{n-3},b_j), j =  \frac{n}{2} - 3, \dots, \frac{n}{2} + 1\}_{hh}$; blue in~\cref{fig:e2},
\item $\{(a_i,b_{\frac{n}{2} + 1}), i =  \frac{n}{2} + 1 , \dots, n-4\}_{hh}$; dark blue in~\cref{fig:e2},
\item $\{(a_{\frac{n}{2}+ 1},b_j), j = \frac{n}{2} + 2 ,\dots,n\}_{hh}$; gray in~\cref{fig:e2}.
\end{itemize}

\smallskip\noindent For $p=3,4,5$, page $p$ of $\mathcal{L}$ contains the following \textcolor{blue}{$\frac{5n}{2}+2$} edges
 
\begin{itemize}[-]
\setlength\itemsep{0em}
\item $\{(a_{p-1},b_j), j= 2,\dots, \frac{n}{2}-p+1\}_{ht}$; dark red in~\cref{fig:e3},
\item $\{(a_p,b_j), j=\frac{n}{2}-p+1,\dots,n-p+2\}_{ht}$; red in~\cref{fig:e3}, 
\item $\{(a_{p+1},b_j), j=n-p+2, \dots, n\}_{ht}$; light red in~\cref{fig:e3},
\item $\{(a_i,b_{p+1}), i=\frac{n}{2},\dots, n\}_{ht}$; dark blue in~\cref{fig:e3},
\item $\{(a_{n+p-6},b_j), j = \frac{n}{2} - 3, \dots, \frac{n}{2}+p-1\}_{hh}$; light blue in~\cref{fig:e3},
\item $\{(a_i,b_{\frac{n}{2} + (p-1)}), i = \frac{n}{2} + (p-1) , \dots, n+(p-6)\}_{hh}$; blue in~\cref{fig:e3},
\item $\{(a_{\frac{n}{2}+ (p-1) },b_j), j =\frac{n}{2} + p,\dots,n\}_{hh}$; gray in~\cref{fig:e3}.
\end{itemize}
 
\begin{figure*}[h!]
	\centering
	\includegraphics[page=3]{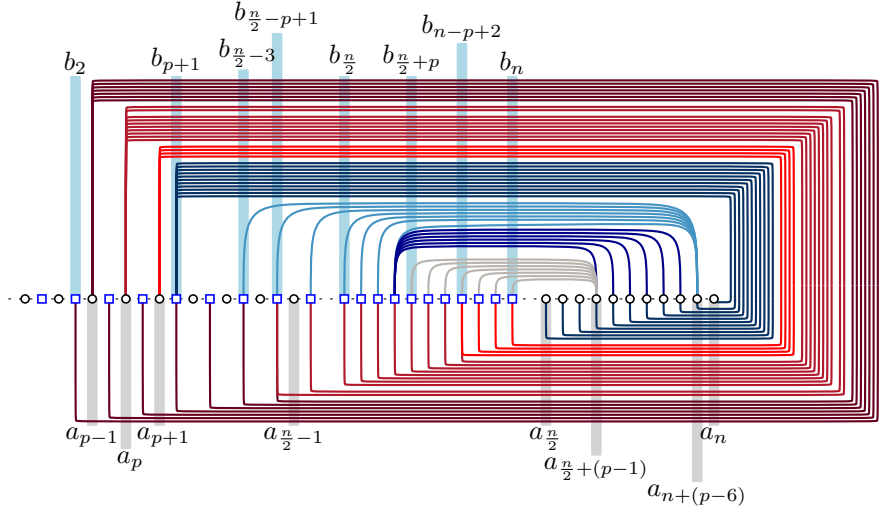}
    \caption{Page $p=3,4,5$ of $\mathcal{L}$ when $n$ is even.}
 \label{fig:e3}
\end{figure*}
 
\smallskip\noindent Page 6 of $\mathcal{L}$ contains the following \textcolor{blue}{$\frac{5n}{2}-7$} edges:
\begin{itemize}[-]
\setlength\itemsep{0em}
\item $\{(a_5,b_j), j= 2,\dots, \frac{n}{2}-5\}_{ht}$; dark red in~\cref{fig:e6},
\item $\{(a_6,b_j), j=\frac{n}{2}-5,\dots, n-4\}_{ht}$; red in~\cref{fig:e6}, 
\item $\{(a_7,b_j), j= n-4, \dots, n\}_{ht}$; light red in~\cref{fig:e6},
\item $\{(a_i,b_7), i=\frac{n}{2},\dots, n\}_{ht}$; dark blue in~\cref{fig:e6},
\item $\{(a_i,b_{\frac{n}{2} + 5}), i =\frac{n}{2}+5 , \dots,n)\}_{hh}$; blue in~\cref{fig:e6},
\item $\{(a_{\frac{n}{2}  + 5 },b_j), j = \frac{n}{2} + 6,\dots,n\}_{hh}$; gray in~\cref{fig:e6}.
\end{itemize}
 
\begin{figure*}[h!]
	\centering
	\includegraphics[page=4]{figures/figarxiv/evenbipartite.pdf}
    \caption{Page $6$ of $\mathcal{L}$ when $n$ is even.}
 \label{fig:e6}
\end{figure*}
 
\smallskip\noindent Page 7 of $\mathcal{L}$ contains the following \textcolor{blue}{$\frac{3n}{2}+23$} edges:
\begin{itemize}[-]
\setlength\itemsep{0em}
\item $\{(a_6,b_j), j= 2,\dots,\frac{n}{2}-6\}_{ht}$; dark red in~\cref{fig:e7},
\item $\{(a_7,b_j), j=\frac{n}{2}-6,\dots, n-5\}_{ht}$; red in~\cref{fig:e7}, 
\item $\{(a_8,b_j), j= n- 5, \dots, n\}_{ht}$; light red in~\cref{fig:e7},
\item $\{(a_i,b_8), i=\frac{n}{2},\dots, n\}_{ht}$; dark blue in~\cref{fig:e7},
\item $\{(a_{\frac{n}{2}-1},b_j), j =\frac{n}{2} , \dots,\frac{n}{2}+3)\}_{hh}$; gray in~\cref{fig:e7},
\item $\{(a_{n-8},b_j) , j = n-8, \dots, n \}_{hh}$; blue in~\cref{fig:e7},
\item $\{(a_i,b_{n-8}) , i = n-7, \dots, n \}_{hh}$; light blue in~\cref{fig:e7}.
\end{itemize}
 
\begin{figure*}[h!]
	\centering
	\includegraphics[page=5]{figures/figarxiv/evenbipartite.pdf}
    \caption{Page $7$ of $\mathcal{L}$ when $n$ is even.}
 \label{fig:e7}
\end{figure*}
 
\smallskip\noindent For $p = 8, \dots, \frac{n}{2}-7$, page $p$ of $\mathcal{L}$ contains the following \textcolor{blue}{$\frac{5n}{2}-2p+7$} edges:
 
\begin{itemize}[-]
\setlength\itemsep{0em}
\item $\{(a_{p-1},b_j), j= 2,\dots,\frac{n}{2}-p+1\}_{ht}$; dark red in~\cref{fig:ep},
\item $\{(a_p,b_j), j=\frac{n}{2} -p+1,\dots, n-p+2\}_{ht}$; red in~\cref{fig:ep}, 
\item $\{(a_{p+1},b_j), j=n-p+2, \dots, n\}_{ht}$; light red in~\cref{fig:ep},
\item $\{(a_i,b_{p+1}), i=\frac{n}{2},\dots, n\}_{ht}$; dark blue in~\cref{fig:ep},
\item $\{(a_i,b_{\frac{n}{2} +p-2}), i =\frac{n}{2} + (p-2) , \dots,n)\}_{hh}$; blue in~\cref{fig:ep},
\item $\{(a_{\frac{n}{2} + (p-2) },b_j), j = \frac{n}{2} + (p-1) ,\dots,n\}_{hh}$; light blue in~\cref{fig:ep}.
\end{itemize}
 
\begin{figure*}[h!]
	\centering
	\includegraphics[page=6]{figures/figarxiv/evenbipartite.pdf}
    \caption{Page $p = 8, \dots, \frac{n}{2}-7$ of $\mathcal{L}$ when $n$ is even.}
 \label{fig:ep}
\end{figure*}
 
\smallskip\noindent Page $\frac{n}{2}-6$ of $\mathcal{L}$ contains the following \textcolor{blue}{2n-2} edges:
\begin{itemize}[-]
\setlength\itemsep{0em}
\item $\{(a_{\frac{n}{2}-7},b_j), j= 2,\dots,\frac{n}{2}-8\}_{ht}$; dark red in~\cref{fig:e},
\item $\{(a_{\frac{n}{2}-6},b_j), j= \frac{n}{2}-8,\dots, n-7\}_{ht}$; red in~\cref{fig:e}, 
\item $\{(a_{\frac{n}{2}-5},b_j), j= n-7, \dots, n\}_{ht}$; light red in~\cref{fig:e},
\item $\{(a_i,b_{\frac{n}{2}-5}), i=\frac{n}{2},\dots, n\}_{ht}$; blue in~\cref{fig:e},
\item $\{(a_i,b_{\frac{n}{2}- 4}), i = \frac{n}{2} + 5 , \dots,n)\}_{hh}$; dark blue in~\cref{fig:e}.
\end{itemize}
 
\begin{figure*}[h!]
	\centering
	\includegraphics[page=7]{figures/figarxiv/evenbipartite.pdf}
    \caption{Page $\frac{n}{2}-6$ of $\mathcal{L}$ when $n$ is even.}
 \label{fig:e}
\end{figure*}
 
\smallskip\noindent For $p=\frac{n}{2} - k,$ with $ k \in \{5,4,3\}$, page $p$ of $\mathcal{L}$ contains the following \textcolor{blue}{$ \frac{3n}{2}-2k+22 $} edges:
 
\begin{itemize}[-]
\setlength\itemsep{0em}
\item $\{(a_{p-1},b_j), j= 2,\dots, \frac{n}{2}-p+1)\}_{ht}$; dark red in~\cref{fig:ek},
\item $\{(a_p,b_j), j=\frac{n}{2}-p+1,\dots, n-p+2)\}_{ht}$; red in~\cref{fig:ek}, 
\item $\{(a_{p+1},b_j), j= n-p+2, \dots, n\}_{ht}$; light red in~\cref{fig:ek},
\item $\{(a_i,b_{p+1}), i=\frac{n}{2},\dots,\frac{n}{2} + (k-1) \}_{ht}$; pink in~\cref{fig:ek},
\item $\{(a_i,b_{p+2}), i= \frac{n}{2} + (k-1),\dots, n-8+k \}_{ht}$; light pink in~\cref{fig:ek},
\item $\{(a_{n-8+k},b_j), j = \frac{n}{2}-3-k, \dots,\frac{n}{2}\}_{ht}$; blue in~\cref{fig:ek},
\item $\{(a_i,b_{n+(2k-11)}), i = n-8+k , \dots, n \}_{ht}$; light blue in~\cref{fig:ek},
\item $\{(a_i,b_{n+(2k-10)}), i = n-8+k , \dots,n\}_{hh}$; dark blue in~\cref{fig:ek},
\item $\{(a_{n-8+k},b_j), j = n+(2k-9) , \dots,n\}_{hh}$; gray in~\cref{fig:ek}.
\end{itemize}
 
\begin{figure*}[h!]
	\centering
	\includegraphics[page=8]{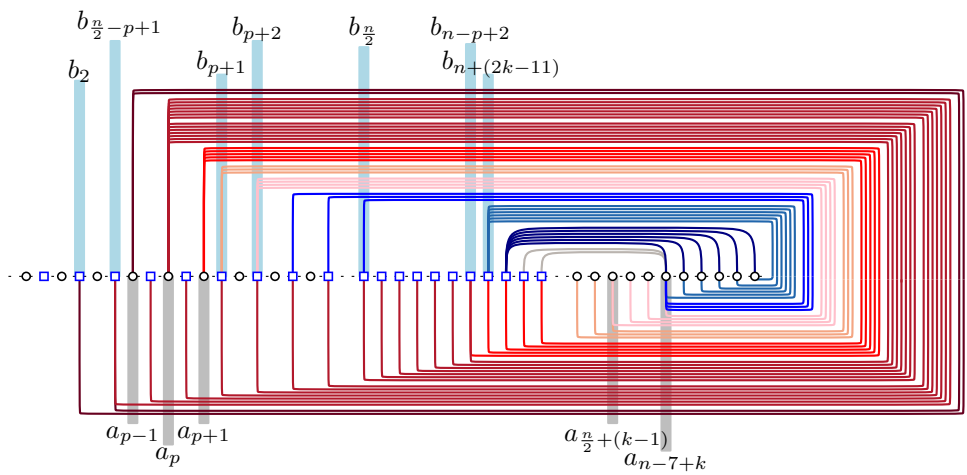}
    \caption{Page $p=\frac{n}{2} - k,$ with $ k \in \{5,4,3\}$ of $\mathcal{L}$ when $n$ is even.}
 \label{fig:ek}
\end{figure*}

\clearpage

\smallskip\noindent Page $\frac{n}{2} - 2$ of $\mathcal{L}$ contains the following \textcolor{blue}{$\frac{3n}{2}+19$} edges:
 
\begin{itemize}[-]
\setlength\itemsep{0em}
\item $\{(a_{\frac{n}{2}-2},b_j), j= 2,\dots, \frac{n}{2}+4\}_{ht}$; dark red in~\cref{fig:eL},
\item $\{(a_{\frac{n}{2}-1},b_j), j=\frac{n}{2}+4,\dots, n\}_{ht}$; red in~\cref{fig:eL}, 
\item $\{(a_i,b_{\frac{n}{2}-1}), i=\frac{n}{2},\frac{n}{2}+1\}_{ht}$; light red in~\cref{fig:eL},
\item $\{(a_i,b_{\frac{n}{2}}), i=\frac{n}{2}+1, \dots,n-7\}_{ht}$; pink in~\cref{fig:eL},
\item $\{(a_i,b_{n-7}), i=\frac{n}{4}+ 1, \dots, n\}_{ht}$; dark blue in~\cref{fig:eL},
\item $\{(a_i,b_{n-6}), i=\frac{n}{4}+ 1 , \dots, n\}_{hh}$; light blue in~\cref{fig:eL},
\item $\{(a_{\frac{n}{4}+ 1},b_j), j=n-5, \dots, n\}_{hh}$; gray in~\cref{fig:eL},
\item $\{(a_{\frac{n}{2}- 3},b_j), j=2,3\}_{hh}$; blue in~\cref{fig:eL}.
\end{itemize}
 
\begin{figure*}[h!]
	\centering
	\includegraphics[page=9]{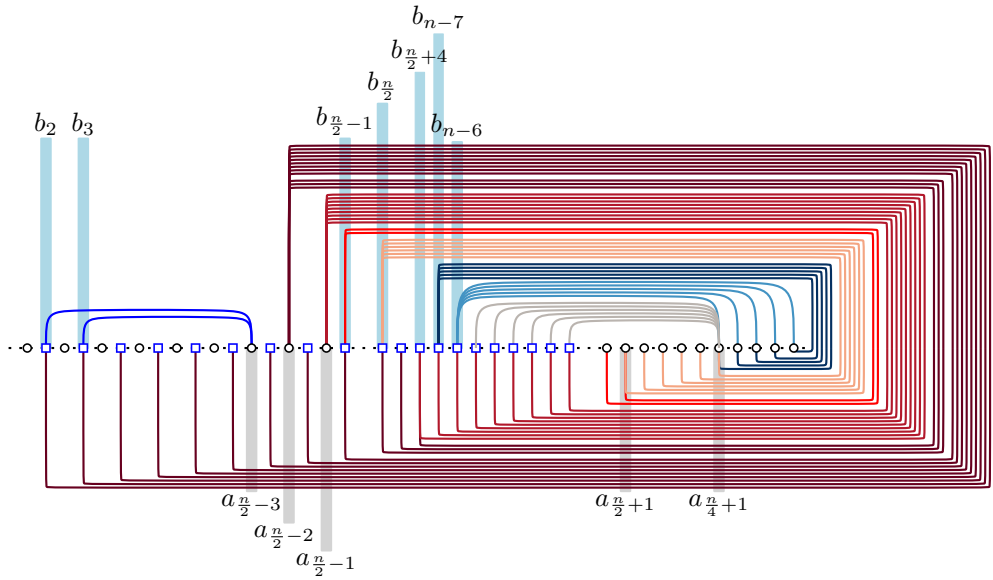}
    \caption{Page $p=\frac{n}{2}-2$ of $\mathcal{L}$ when $n$ is even.}
 \label{fig:eL}
\end{figure*}
 
\noindent So, in total $\mathcal{L}$ has $ n^2 $ edges. Since no two edges in the same rique deviate from the properties of cylindric layouts, it follows that the rique number of $K_{n,n}$ is at most $\frac{n}{2} -2 $ when n is even.

\begin{figure*}[p]
    \centering
    \includegraphics[page=1,width=\textwidth]{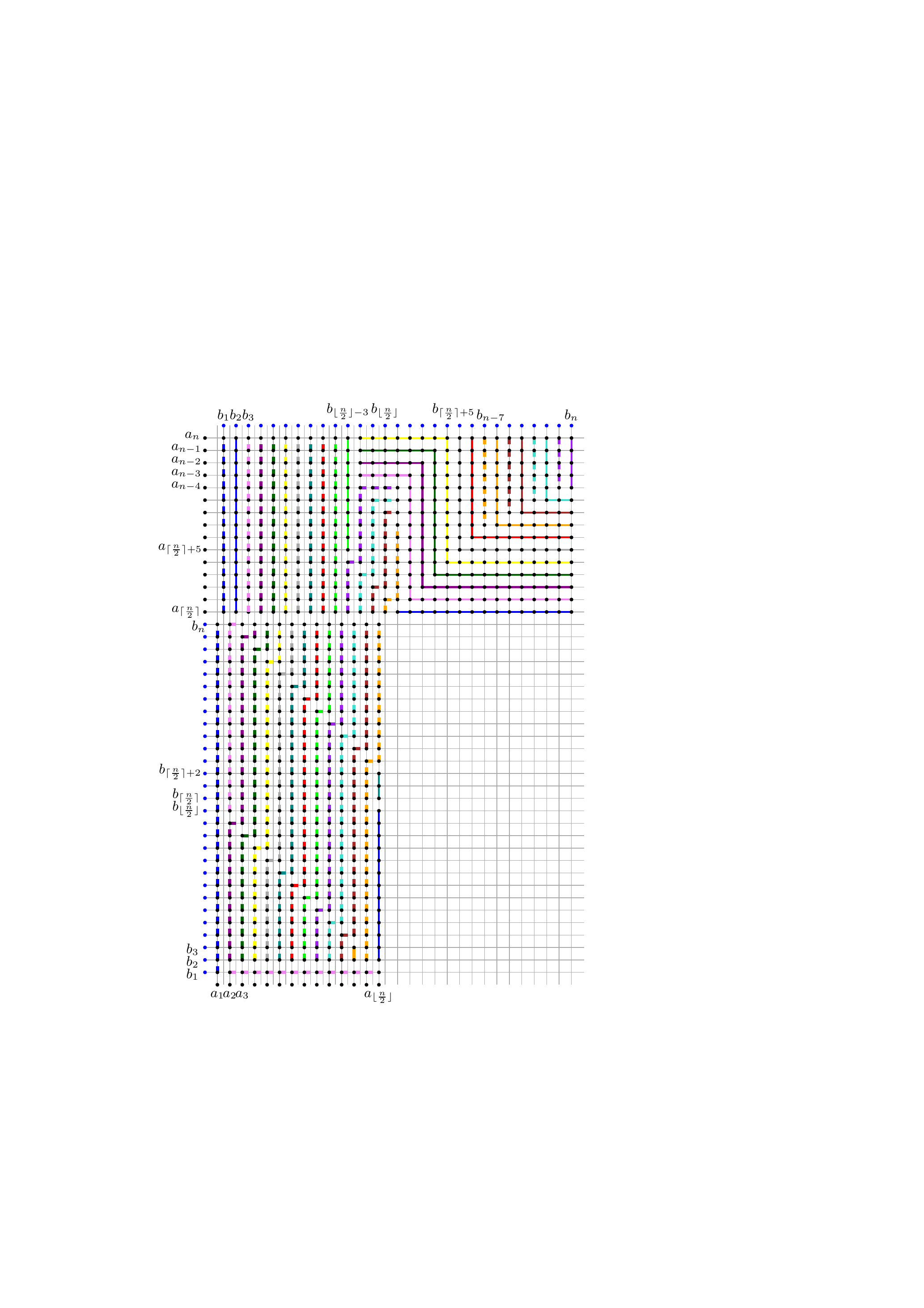}
    \caption{Illustration of the grid representation of a rique layout of $K_{n,n}$ when $n$ is odd, in which paths of the same color correspond to the same rique. The points of the grid that are covered by a solid (dashed) path are head-head (head-tail, respectively). 
    Here, the ``special'' pages are the first (blue), second (violet), third (dark-purple), fourth (dark-green), fifth (yellow), sixth (dark-gray), eighth (cyan), $\frac{n-1}{2}-5$ (green), $\frac{n-1}{2}-4$ (purple), $\frac{n-1}{2}-3$ (turquoise), $\frac{n-1}{2}-2$ (brown), and $\frac{n-1}{2}-1$ (orange) when $n$ is odd.}
    \label{fig:k29,29completebipartite}
\end{figure*}

\begin{figure*}
    \centering
    \includegraphics[page=2,width=\textwidth]{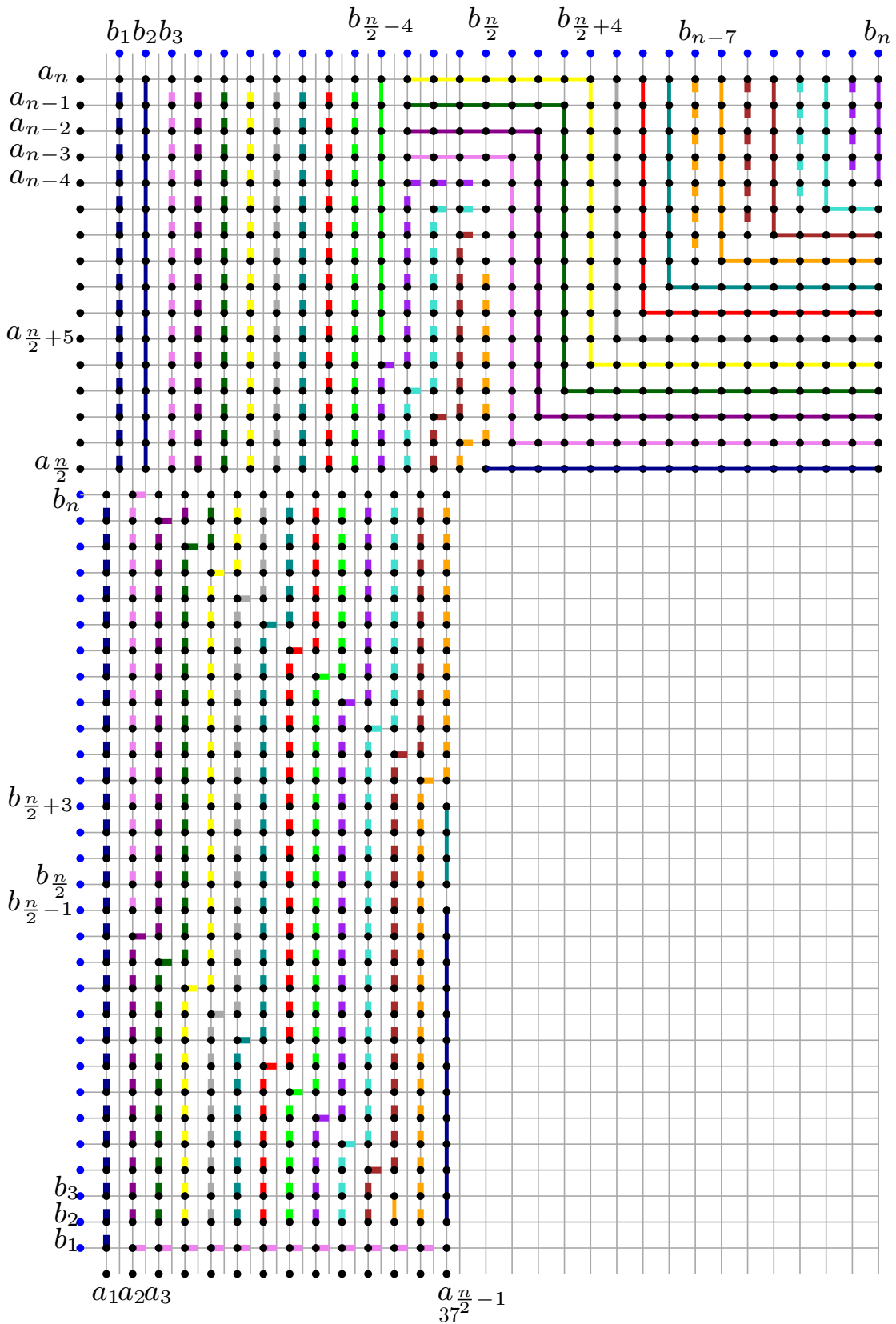}
    \caption{Illustration of the grid representation of a rique layout of $K_{n,n}$ when $n$ is even, in which paths of the same color correspond to the same rique. The points of the grid that are covered by a solid (dashed) path are head-head (head-tail, respectively). 
    When $n$ is even, the ``special'' pages are the first (blue), second (violet), third (dark-purple), fourth (dark-green), fifth (yellow), sixth (dark-gray), eighth (cyan), $\frac{n}{2}-6$ (green), $\frac{n}{2}-5$ (purple), $\frac{n}{2}-4$ (turquoise), $\frac{n}{2}-3$ (brown), and $\frac{n}{2}-2$ (orange).}
    \label{fig:k30,30completebipartite}
\end{figure*}

\end{proof}

\clearpage

\section{SAT formulation}
\label{sec:sat}

In this section, we present a SAT formulation for the problem of testing whether a given graph with $n$ vertices and $m$ edges admits a deque layout with $p$ of deques; an implementation has already been incorporated in~\cite{DBLP:journals/corr/abs-2003-09642}, whose source code is available at \url{https://github.com/linear-layouts/SAT}. However, before describing our formulation, we deem important to state that, in~\cite{DBLP:conf/gd/BekosFKKKR22}, Bekos et al.\ have already presented a corresponding SAT formulation, when the $p$ pages are riques. However, their approach heavily relies on the fact that this specific type of linear layouts can be characterized by means of a forbidden pattern in the underlying order (similar to the corresponding ones for stack and queue layouts~\cite{DBLP:conf/gd/Bekos0Z15}). Given that dequeue layouts cannot be characterized by means of such forbidden patterns in the underlying order, we need a slightly different approach.

Similar to~\cite{DBLP:conf/gd/BekosFKKKR22}, our approach is an extension of the one in~\cite{DBLP:conf/gd/Bekos0Z15} for the stack layout problem, in which there exist three different types of variables, denoted by $\sigma$, $\phi$, and $\chi$, with the following meanings: 
\begin{inparaenum}[(i)]
\item for a pair of vertices $u$ and $v$, variable $\sigma(u,v)$ is $\texttt{true}$, if and only if $u$ is to the left of $v$ along
the spine, 
\item for an edge $e$ and a page $i$, variable $\phi_i(e)$ is $\texttt{true}$, if and only if edge $e$ is assigned to page $i$ of the book, and 
\item for a pair of edges $e$ and $e'$, variable $\chi(e,e')$ is $\texttt{true}$, if and only if $e$ and $e'$ are assigned to the same page.
\end{inparaenum}  
Hence, there exist in total $O(n^2+m^2+pm)$ variables, while a set of $O(n^3)$ clauses ensures that the underlying order is indeed linear; for details see~\cite{DBLP:conf/gd/Bekos0Z15}. To overcome the issue that arises in the absence of forbidden pattern, we introduce $4pm$ variables, such that variable $\tau_i(e,x)$ with $x \in \{hh,ht,th,tt\}$ is $\texttt{true}$, if and only if the type of edge $e$ at page $i$ is $x$. We ensure that each edge has at least one of the allowed types, by introducing the following clause for each edge $e$:
 
\[
\bigvee_{i=1}^p (\tau_{i}(e,hh) \vee \tau_{i}(e,ht) \vee \tau_{i}(e,th) \vee \tau_{i}(e,tt))
\]
 
With these variables, we can express different configurations that cannot occur in a deque layout as clauses in the SAT formula. These clauses are obtained by avoiding crossings between all edge types in the cylindric representation of the graph. E.g., to express the different configurations that cannot occur for a head-head edge $e=(u,v)$ and a head-tail edge $e'=(u',v')$, we introduce the following clause for each page $i$ of the~layout\footnote{Note that some parts of the clause appear only certain conditions apply on the endpoints of $e$ and $e$'. These conditions are listed next to the corresponding parts, such that if a condition is not fulfilled, then the corresponding part has to be omitted.}:
 
\begin{align*} 
\phi_{i}(e) \wedge \phi_{i}(e') \wedge \tau_{i}(e,hh) \wedge \tau_{i}(e',ht)  & \rightarrow \\ 
~~~\neg (\sigma(u, u') \wedge \sigma(u', v) \wedge \sigma(v, v')) \quad u \neq v \neq u' &\\
~~~\wedge\neg (\sigma( v, u') \wedge \sigma( u', u) \wedge \sigma( u, v')) \quad u \neq v \neq u' &\\
~~~\wedge\neg (\sigma( u, v') \wedge \sigma( v', v) \wedge \sigma( v, u')) \quad u \neq v \neq v' &\\
~~~\wedge\neg (\sigma( v, v') \wedge \sigma( v', u) \wedge \sigma( u, u')) \quad u \neq v \neq v' &\\
~~~\wedge\neg (\sigma( u, u') \wedge \sigma( u', v') \wedge \sigma( v', v)) \quad u\neq u' &\\
~~~\wedge\neg (\sigma( u, v') \wedge \sigma( v', u') \wedge \sigma( u', v)) \quad u\neq v'& \\
~~~\wedge\neg (\sigma( v, u') \wedge \sigma( u', v') \wedge \sigma( v', u)) \quad v\neq u' & \\
~~~\wedge\neg (\sigma( v, v') \wedge \sigma( v', u') \wedge \sigma( u', u)) \quad v\neq v' &
\end{align*}

We introduce a clause similar to the one above for each pair of types of edges, yielding in total $O(pm^2)$ clauses. This completes the construction of the formula. Note that the formulation can be easily adjusted for rique layouts by introducing for each edge $e$ and each page $i$ the following clause forbidding tail-head and tail-tail edges: 
$\neg\tau_{i}(e,th) \wedge \neg\tau_{i}(e,tt)$.
 
Somehow unexpectedly, this simple adjustment was more efficient in practice than the one by Bekos et al.~\cite{DBLP:conf/gd/BekosFKKKR22}, which is based on implementing the forbidden pattern of rique layouts.

\smallskip\noindent\textbf{Findings.} The implementation was extremely helpful, in general, for developing all upper bounds of this paper. It further shows that the upper bound of \cref{thm:rique-complete} is tight for all values of $n \leq 30$ (see \cref{rem:bound-tightness}). Another notable observation is that for $K_{n,n}$ it is possible to obtain a better upper bound than the one of \cref{obs:deque-stack-bound} (or in other words that $k$ deques are strictly more powerful than $2k$ stacks): Our implementation shows that $K_{3n,3n}$ with $n \in \{2,3,4,5\}$ needs $n+1$ stacks, while the solver provided solutions with $n$ deques for the corresponding values of $n$. Note that this result would be implied (for any $n$) by \cref{thm:deque-bipartite}, if the bound~\cite{DBLP:journals/jct/EnomotoNO97} on the stack number of $K_{n,n}$ was shown to be tight. 

\section{Open Problems}
\label{sec:conclusions}

In this paper, we presented bounds on the deque- and rique-numbers of complete and complete bipartite graphs. 
We conclude with some open problems:  
 
\begin{inparaenum}[(i)]
\item We conjecture that the bound of \cref{thm:rique-complete} is tight.
\item As mentioned in the introduction, the deque-number of planar graphs is $2$. We conjecture that also their rique-number is $2$. 
\item Another natural direction to follow is to extend the study to other classes of graphs, as it is the case with the corresponding stack- and queue-numbers.
\item Studying inclusion relationships is also of interest, e.g., the class of graphs admitting 1-deques is not a subclass of the class of graphs admitting 1-rique, 1-stack layouts, as a maximal planar graph with a Hamiltonian path plus an edge belongs to the former but not to the latter. What about the other direction?
\item Related to our research is also the problem of closing the gap between the lower bound of $\lceil \frac{n}{2} \rceil$ and the upper bound of $\lfloor \frac{2n}{3} \rfloor+1$~\cite{DBLP:journals/jct/EnomotoNO97} on the stack number of $K_{n,n}$. 
\end{inparaenum}

\bibliographystyle{abbrv}
\bibliography{general, stacks, queues}
\end{document}